\definecolor{dark-red}{rgb}{0.4,0.15,0.15}
\definecolor{dark-blue}{rgb}{0.15,0.15,0.4}
\definecolor{medium-blue}{rgb}{0,0,0.5}
\definecolor{gray}{rgb}{0.5,0.5,0.5}
\definecolor{color-Ig}{rgb}{0.15,0.7,0.15}
\newcommand{\NP}{\ensuremath{\mathsf{NP}}\xspace}
\newcommand{\Hcal}{\mathcal{H}}
\newcommand{\Lcal}{\mathcal{L}}
\newcommand{\Ocal}{\mathcal{O}}
\newcommand{\Rcal}{\mathcal{R}}
\newcommand{\ETH}{{\sf ETH}\xspace}
\newcommand{\FPT}{{\sf FPT}\xspace}
\newcommand{\probl}[3]{
  \begin{flushleft}
    \fbox{
      \begin{minipage}{.99\textwidth}
        \noindent {\sc #1}\\
        {\bf Input:} #2\\
        {\bf Output:} #3
      \end{minipage}}
  \end{flushleft}
}
\newcommand{\tw}{{\sf tw}\xspace}
\newcommand{\todo}[1][]{%
  \ifx/#1/%
    \textcolor{red}{TODO!}%
  \else%

    \textcolor{red}{todo: #1}%
  \fi%
}
\newcommand{\false}{{\sf false}\xspace}
\theoremstyle{plain}
\title{Hitting forbidden induced subgraphs on bounded treewidth graphs}
\titlerunning{Hitting forbidden  induced subgraphs on bounded treewidth graphs} 
\author{Ignasi Sau}{LIRMM, Universit\'e de Montpellier, CNRS, Montpellier, France}{ignasi.sau@lirmm.fr}{https://orcid.org/0000-0002-8981-9287}{CAPES-PRINT Institutional Internationalization Program, process 88887.371209/ 2019-00, and projects DEMOGRAPH (ANR-16-CE40-0028), ESIGMA (ANR-17-CE23-0010), and ELIT (ANR-20-CE48-0008-01).}
\author{U\'everton dos Santos Souza}{Instituto de Computa\c c\~ao, Universidade Federal Fluminense, Niter\'oi, Brazil}{ueverton@ic.uff.br}{https://orcid.org/0000-0002-5320-9209}{Grant E-26/203.272/2017 Rio de Janeiro Research Foundation (FAPERJ) and Grant 303726/2017-2 National Council for Scientific and Technological Development (CNPq).}
\authorrunning{Ignasi Sau and U\'everton S. Souza} 
\keywords{parameterized complexity, induced subgraphs, treewidth, hitting subgraphs, dynamic programming, lower bound, Exponential Time Hypothesis.}
\begin{document}

\maketitle

\begin{abstract}

For a fixed graph $H$, the \textsc{$H$-IS-Deletion} problem asks, given a graph $G$, for the minimum size of a set $S \subseteq V(G)$ such that $G\setminus S$ does not contain $H$ as an induced subgraph. Motivated by previous work about hitting (topological) minors and subgraphs on bounded treewidth graphs, we are interested in determining, for a fixed graph $H$, the smallest function $f_H(t)$ such that \textsc{$H$-IS-Deletion} can be solved in time $f_H(t) \cdot n^{\Ocal(1)}$ assuming the Exponential Time Hypothesis (\ETH), where $t$ and $n$ denote the treewidth and the number of vertices of the input graph, respectively.

We show that $f_H(t) = 2^{\Ocal(t^{h-2})}$ for every graph $H$ on $h \geq 3$ vertices, and that $f_H(t) = 2^{\Ocal(t)}$ if $H$ is a clique or an independent set. We present a number of lower bounds by generalizing a reduction of Cygan et al.~[Inf. Comput. 2017] for the subgraph version. In particular, we show that when $H$ deviates slightly from a clique, the function $f_H(t)$ suffers a sharp jump: if $H$ is obtained from a clique of size $h$ by removing one edge, then $f_H(t) = 2^{\Theta(t^{h-2})}$. We also show that $f_H(t) = 2^{\Omega(t^{h})}$ when $H=K_{h,h}$, and this reduction answers an open question of Mi. Pilipczuk~[MFCS 2011] about the function  $f_{C_4}(t)$ for the subgraph version.

Motivated by Cygan et al.~[Inf. Comput. 2017], we also consider the colorful variant of the problem, where each vertex of $G$ is colored
with some color from $V(H)$ and we require to hit only induced copies of $H$ with matching colors. In this case, we determine, under the \ETH, the  function $f_H(t)$ for every connected graph $H$ on $h$ vertices: if $h\leq 2$ the problem can be solved in polynomial time; if $h\geq 3$,
$f_H(t) = 2^{\Theta(t)}$ if $H$ is a clique, and $f_H(t) = 2^{\Theta(t^{h-2})}$ otherwise.
\end{abstract}

\newpage

\section{Introduction}
\label{sec:intro}

Graph modification problems play a central role  in modern algorithmic graph theory. In general, such a problem is determined by a target graph class ${\cal G}$ and some prespecified set ${\cal M}$ of allowed {\sl  local}  modifications, and the question is, given
an input graph $G$ and an integer $k$, whether it is possible to
transform $G$ to a graph in ${\cal G}$ by applying $k$ modification operations from ${\cal M}$.
A wealth of graph problems can be formulated
for different instantiations of ${\cal G}$ and ${\cal M}$, and applications span diverse topics such as computational biology, computer vision, machine learning, networking, and sociology~\cite{FominSM15grap,BodlaenderHL14grap,CrespelleDFG13asur}.

The most studied local modification operation in the literature is vertex deletion and, among the target graph classes, of particular relevance are the ones defined by {\sl excluding} the graphs  in a family ${\cal F}$ according to some natural graph containment relation, such as minor, topological minor, subgraph, or induced subgraph. By the well-known classification result of Lewis and Yannakakis~\cite{LeYa80}, all interesting cases of these problems are \NP-hard.

One of the common strategies to cope with \NP-hard problems is that of parameterized complexity~\cite{DF13,CyganFKLMPPS15}, where the core idea is to identify a parameter $k$ associated with an input of size $n$ that allows for an algorithm in time $f(k) \cdot n^{\Ocal(1)}$, called \emph{fixed-parameter tractable} (or \FPT for short). A natural goal within parameterized algorithms is the quest for the ``best possible'' function $f(k)$ in an \FPT algorithm. Usually, the working hypothesis to prove lower bounds is the \emph{Exponential Time Hypothesis} (\ETH) that states, in a simplified version, that the 3-\textsc{Sat} problem on $n$ variables cannot be solved in time $2^{o(n)}$; see~\cite{ImpagliazzoP01-ETH,ImpagliazzoP01} for more details.

Among graph parameters,  one of the most successful ones is \emph{treewidth}, which --informally speaking-- quantifies the topological resemblance of a graph to a tree. The celebrated theorem due to Courcelle~\cite{Courcelle90} states that every graph problem that can be expressed in Monadic Second Order logic is solvable in time $f(t) \cdot n$ on $n$-vertex graphs given along with a tree decomposition of width at most $t$. In particular, it applies to most vertex deletion problems discussed above. A very active area in parameterized complexity during the last years consists in optimizing, under the \ETH, the function $f(t)$ 
for several classes of vertex deletion problems. As a byproduct, several cutting-edge techniques for obtaining both lower bounds~\cite{LokshtanovMS11} and algorithms~\cite{BodlaenderCKN15,CyganNPPRW11,FominLPS16} have been obtained, which have become part of the standard toolbox of parameterized complexity. Obtaining tight bounds under the \ETH for this kind of vertex deletion problems is, in general, a challenging task, as we proceed to discuss.

Let $H$ be a fixed graph and let $\prec$ be a fixed graph containment relation. In the \textsc{$H$-$\prec$-Deletion} (meta)problem, given an $n$-vertex graph $G$, the objective is to find a set $S \subseteq V(G)$ of minimum size such that $G \setminus S$ does not contain $H$ according to containment relation $\prec$. We parameterize the problem by the treewidth of $G$, denoted by $t$, and the objective is to find the smallest function $f_H(t)$ such that \textsc{$H$-$\prec$-Deletion} can be solved in time $f_H(t) \cdot n^{\Ocal(1)}$.


The case $\prec =$ `minor'  has been object of intense study during the last years~\cite{CyganNPPRW11,RueST14,BodlaenderCKN15,DornPBF10,JansenLS14,Pilipczuk17}, culminating in a tight dichotomy about the function $f_H(t)$ when $H$ is connected~\cite{BasteST20-part1,BasteST20-part2,BasteST20-part3,BasteST20}.

The case $\prec =$ `topological minor' has been also studied recently~\cite{BasteST20-part1,BasteST20-part2,BasteST20-part3}, but we are still far from obtaining a complete characterization of the function $f_H(t)$. For both minors and topological minors, so far there is no graph $H$ such that\footnote{For conciseness, we use (in a non-standard way) the asymptotic notations $\Omega$ and $\Theta$ to denote conditional lower bounds under the \ETH.} $f_H(t) = 2^{\Omega(t^c)}$ for some $c > 1$.

Recently, Cygan et al.~\cite{CyganMPP17} started a systematic study of the case $\prec =$ `subgraph', which turns out to exhibit a quite different behavior from the above cases: for every integer $c \geq 1$ there is a graph $H$ such that $f_H(t) = 2^{\Theta(t^c)}$. Cygan et al.~\cite{CyganMPP17} provided a general upper bound and some particular lower bounds on the function $f_H(t)$, but a complete characterization seems to be currently  out of reach.
Previously, Mi. Pilipczuk~\cite{Pilipczuk11} had studied the cases where $H$ is a cycle, finding the function $f_{C_i}(t)$ for every $i \geq 3$ except for $i=4$.

In this article we focus on the case $\prec =$ `induced subgraph' that, to the best of our knowledge, had not been studied before in the literature, except for the case $K_{1,3}$, for which Bonomo-Braberman et al.~\cite{claw-free} showed very recently that $f_{K_{1,3}}(t) = 2^{\Ocal(t^2)}$.

\medskip
\noindent \textbf{Our results and techniques}. We first show (Theorem~\ref{thm:generic-algo-no-colors}) that, for every graph $H$ on $h \geq 3$ vertices, $f_H(t) = 2^{\Ocal(t^{h-2})}$.
 The algorithm uses standard dynamic programming over a nice tree decomposition of the input graph. However, in order to achieve the claimed running time, we need to use a slightly non-trivial encoding in the tables that generalizes an idea of Bonomo-Braberman et al.~\cite{claw-free},  by introducing an object that we call \emph{rooted $H$-folio}, inspired by similar encodings in the context of graph minors~\cite{BasteST20-part1,AdlerDFST11}.

It turns out that when $H$ is a clique or an independent set (in particular, when $|V(H)| \leq 2$), the problem can be solved in {\sl single-exponential} time, that is, $f_H(t) = 2^{\Ocal(t)}$. The case of cliques (Theorem~\ref{Kh_single}), which coincides with the subgraph version, had been already observed by Cygan et al.~\cite{CyganMPP17}, using essentially the folklore fact that every clique is contained in some bag of a tree decomposition. The case of independent sets (Theorem~\ref{thm:algo-Is}) is more interesting, as we exploit tree decompositions in a novel way, by showing (Lemma~\ref{chordal_Ih-free}) that a chordal completion of the complement of a solution can be covered by a constant number of cliques, which implies (Lemma~\ref{bag_cover}) that the complement of a solution is contained in a constant number of bags of the given tree decomposition.

Our main technical contribution consists of lower bounds. Somehow surprisingly, we show (Theorem~\ref{thm:lower-bound-no-colors}) that when $H$ deviates slightly from a clique, the function $f_H(t)$ suffers a sharp jump: if $H$ is obtained from a clique of size $h$ by removing one edge, then $f_H(t) = 2^{\Omega(t^{h-2})}$, and this bound is tight by Theorem~\ref{thm:generic-algo-no-colors}. We also provide lower bounds for other graphs $H$ that are ``close'' to cliques (Theorems~\ref{thm:lower-bound-no-colors-WEAKER},~\ref{thm:lower-bound-no-colors-2}, and~\ref{thm:lower-bound-complete-bipartite}), some of them being (almost) tight. In particular, we show (Theorem~\ref{thm:lower-bound-complete-bipartite}) that when $H=K_{h,h}$, we have that $f_H(t) = 2^{\Omega(t^{h})}$. By observing that the proof of the latter lower bound also applies to occurrences of $K_{h,h}$ as a {\sl subgraph}, the particular case $h=2$ (Corollary~\ref{cor:C4}) answers the open question of Mi. Pilipczuk~\cite{Pilipczuk11} about the function  $f_{C_4}(t)$.  All these reductions are inspired by a reduction of
Cygan et al.~\cite{CyganMPP17} for the subgraph version.
We first present the general frame of the reduction together with some properties that the eventual instances constructed for each of the graphs $H$ have to satisfy, yielding in a unified way (Lemma~\ref{lem:properties}) lower bounds for the corresponding problems.

\medskip

Motivated by the work of Cygan et al.~\cite{CyganMPP17}, we also consider the \emph{colorful} variant of the problem, where the input graph $G$ comes equipped with a coloring $\sigma: V(G) \to V(H)$ and we are only interested in hitting induced subgraphs of $G$ isomorphic to $H$ such that their colors match. In this case, we first observe that essentially the same dynamic programming algorithm of the non-colored version (Theorem~\ref{thm:generic-algo-colors}) yields the upper bound $f_H(t) = 2^{\Ocal(t^{h-2})}$ for every graph $H$ on $h \geq 3$ vertices. Again, our main contribution concerns lower bounds: we show (Theorem~\ref{thm:lower-bound-colors}), by modifying appropriately the frame introduced for the non-colored version, that $f_H(t) = 2^{\Omega(t^{h-2})}$ for every graph $H$ having a connected component on $h$ vertices that is not a clique. Since the case where $H$ is a clique can also be easily solved in single-exponential time (Theorem~\ref{Kh_single}), which can be shown (Theorem~\ref{thm:LBI3}) to be optimal, it follows that  if $H$ is a connected graph on $h \geq 3$ vertices, $f_H(t) = 2^{\Theta(t)}$ if $H$ is a clique, and $f_H(t) = 2^{\Theta(t^{h-2})}$ otherwise. It is easy to see that the cases where $|V(H)| \leq 2$ can be solved in polynomial time by computing a minimum vertex cover in a bipartite graph.

\medskip
\noindent\textbf{Organization.} In Section~\ref{sec:prelim} we provide some basic preliminaries and formally define the problems. In Section~\ref{sec:algos-no-colors} we present the algorithms for both problems, and in Section~\ref{sec:lower-bounds-no-colors} (resp. Section~\ref{sec:with-colors}) we provide the lower bounds for the non-colored (resp. colored) version. Finally, we conclude the article in Section~\ref{sec:concl} with some open questions.

\section{Preliminaries}
\label{sec:prelim}

\noindent\textbf{Graphs and functions.} We use standard graph-theoretic notation, and we refer the reader to~\cite{Diestel12} for any undefined notation. We will only consider undirected graphs without loops nor multiple edges, and we denote an edge between two vertices $u$ and $v$ by $\{u,v\}$. A subgraph $H$ of a graph $G$ is \emph{induced} if $H$ can be obtained from $G$ by deleting vertices. A graph $G$ is \emph{$H$-free} if it does not contain any induced subgraph isomorphic to $H$.  For a graph $G$ and a set $S \subseteq V(G)$, we use the notation $G \setminus S :=G[V(G) \setminus S]$. A vertex $v$ is \emph{complete} (resp. \emph{anticomplete}) to a set $S \subseteq V(G)$ if $v$ is adjacent (resp. not adjacent) to every vertex in $S$. A vertex set $S$ of a connected graph $G$ is a \emph{separator} if $G \setminus S$ is disconnected. 

We denote by $\Delta(G)$ (resp. $\omega(G)$) the maximum vertex degree (resp. clique size) of a graph $G$.
For an integer $h \geq 1$, we denote by $P_h$ (resp. $I_h$, $K_h$) the path (resp. independent set, clique) on $h$ vertices, and by $K_h-e$ the graph obtained from $K_h$ by deleting one edge. For two integers $a,b\geq 1$, we denote by $K_{a,b}$ the bipartite graph with parts of sizes $a$ and
$b$. We denote the disjoint union of two graphs $G_1$ and $G_2$ by $G_1 + G_2$.

A graph property is \emph{hereditary} if whenever it holds for a graph $G$, it holds for all its induced subgraphs as well. The \emph{open} (resp. \emph{closed}) \emph{neighborhood} of a vertex $v$ is denoted by $N(v)$ (resp. $N[v]$). A vertex is \emph{simplicial} if its (open or closed) neighborhood induces a clique. A graph $G$ is \emph{chordal} if it does not contain induced cycles of length at least four or, equivalently, if $V(G)$ can be ordered $v_1, \ldots, v_n$ such that, for every $2 \leq  i \leq  n$, vertex $v_i$ is simplicial in the subgraph of $G$ induced by $\{v_1, \ldots, v_{i-1}\}$. Note that being chordal is a hereditary property.

\smallskip
Given a function $f:A \to B$ between two sets $A$ and $B$ and a subset $A' \subseteq A$, we denote by $f|_{A'}$ the restriction of $f$ to $A'$ and by ${\sf im}(f)$ the image of $f$, that is, ${\sf im}(f) = \{b \in B \mid \exists a \in A: f(a)=b\}$. For an integer $k \geq 1$, we let $[k]$ be the set containing all integers $i$ with $1 \leq i \leq k$.



\medskip

\noindent\textbf{Definition of the problems.} Before formally defining the problems considered in this article, we introduce some terminology, mostly taken from~\cite{CyganMPP17}. Given a graph $H$, an \emph{$H$-coloring} of a graph $G$ is a function $\sigma: V(G) \to V(H)$. A \emph{homomorphism} (resp. \emph{induced homomorphism} from a graph $H$ to a graph $G$ is a function $\pi: V(H) \to V(G)$ such that $\{u,v\} \in E(H)$ implies (resp. if and only if) $\{\pi(u),\pi(v)\} \in E(G)$. When $G$ is $H$-colored by a function $\sigma$, an \emph{(induced) $\sigma$-homomorphism} from $H$ to $G$ is an (induced) homomorphism $\pi$ from $H$ to $G$ with the additional property that every vertex is mapped to the appropriate color, that is, $\sigma(\pi(a))=a$ for every vertex $a \in V(H)$. An \emph{(induced) $H$-subgraph} of $G$ is an (induced) injective homomorphism from $H$ to $G$ and, if $G$ is $H$-colored by a function $\sigma$, an \emph{(induced) $\sigma$-$H$-subgraph} of $G$ is an (induced) injective $\sigma$-homomorphism from $H$ to $G$. We say that a vertex set $X \subseteq V(G)$ \emph{hits} an (induced) $\sigma$-$H$-subgraph $\pi$ if $X \cap \pi(V(H)) \neq \emptyset$.

For a fixed graph $H$, the problems we consider in this article are defined as follows.
\vspace{-.3cm}

\probl{$H$-IS-Deletion}{A graph $G$.}{The minimum size of a set $X \subseteq V(G)$ that hits all induced $H$-subgraphs of $G$. }

\probl{Colorful $H$-IS-Deletion}{A graph $G$ and an $H$-coloring $\sigma$ of $G$.}{The minimum size of a set $X \subseteq V(G)$ that hits all induced $\sigma$-$H$-subgraphs of $G$. }


The \textsc{$H$-S-Deletion} and \textsc{Colorful $H$-S-Deletion} problems are defined similarly, just by removing the word `induced' from the above definitions. In the decision version of these problems, we are given a target budget $k$, and the objective is to decide whether there exists a hitting set of size at most $k$. Unless stated otherwise, we let $n$ denote the number of vertices of input graph of the problem under consideration. When expressing the running time of an algorithm, we will sometimes use the
$\Ocal^*(\cdot)$ notation, which  suppresses polynomial factors in the input size.


\medskip

\noindent\textbf{Tree decompositions.} A \emph{tree decomposition} of a graph $G$ is a pair ${\cal D}=(T,{\cal X})$, where $T$ is a tree
and ${\cal X}=\{X_{w}\mid w\in V(T)\}$ is a collection of subsets of $V(G)$, called \emph{bags},
such that:
\begin{itemize}
\item $\bigcup_{w \in V(T)} X_w = V(G)$,
\item for every edge $\{u,v\} \in E$, there is a $w \in V(T)$ such that $\{u, v\} \subseteq X_w$, and
\item for each $\{x,y,z\} \subseteq V(T)$ such that $z$ lies on the unique path between $x$ and $y$ in $T$,  $X_x \cap X_y \subseteq X_z$.
\end{itemize}
We call the vertices of $T$ {\em nodes} of ${\cal D}$ and the sets in ${\cal X}$ {\em bags} of ${\cal D}$. The \emph{width} of a  tree decomposition ${\cal D}=(T,{\cal X})$ is $\max_{w \in V(T)} |X_w| - 1$.
The \emph{treewidth} of a graph $G$, denoted by $\tw(G)$, is the smallest integer $t$ such that there exists a tree decomposition of $G$ of width at most $t$.
We need to introduce nice tree decompositions, which will make the presentation of the algorithms much simpler. 


\medskip
\noindent
\textbf{Nice tree decompositions.} Let ${\cal D}=(T,{\cal X})$
be a tree decomposition of $G$, $r$ be a vertex of $T$, and   ${\cal G}=\{G_{w}\mid w\in V(T)\}$ be
a collection of subgraphs of   $G$, indexed by the vertices of $T$.
A triple $({\cal D},r,{\cal G})$ is a
\emph{nice tree decomposition} of $G$ if the following conditions hold:
\begin{itemize}

\item $X_{r} = \emptyset$ and $G_{r}=G$,
\item each node of ${\cal D}$ has at most two children in $T$,
\item for each leaf $\ell \in V(T)$, $X_{\ell} = \emptyset$ and $G_{\ell}=(\emptyset,\emptyset).$ Such an $\ell$ is called a {\em leaf node},
\item if $w \in V(T)$ has exactly one child $w'$, then either
  \begin{itemize}
  \item $X_w = X_{w'}\cup \{v_{\rm in}\}$ for some $v_{\rm in} \not \in X_{w'}$ and $G_{w}=G[V(G_{w'})\cup\{v_{\rm in}\}]$.
    The node $w$ is called an \emph{introduce node}  and the vertex $v_{\rm in}$ is the {\em introduced vertex} of $X_{w}$,
  \item $X_w = X_{w'} \setminus \{v_{\rm out}\}$ for some $v_{\rm out} \in X_{w'}$ and $G_{w}=G_{w'}$.
    The node $w$ is called a  \emph{forget node} node and $v_{\rm out}$ is the {\em forget vertex} of $X_{w}$.
  \end{itemize}
\item if $w \in V(T)$ has exactly two children $w_1$ and $w_2$, then $X_{w} = X_{w_1} = X_{w_2}$, $E(G_{w_1})\cap E(G_{w_2})=E(G[X_w])$, and $G_w = (V(G_{w_1})\cup V(G_{w_2}),E(G_{w_1})\cup E(G_{w_2}))$. The node $w$ is called a \emph{join node}.
\end{itemize}


For each $w \in V(T)$, we denote by $V_w$ the set $V(G_w)$.
Given a tree decomposition, it is possible to transform it in polynomial time to a {\sl nice} one of the same width~\cite{Klo94}. Moreover, by Bodlaender et al.~\cite{BodlaenderDDFLP16} we can find in time $2^{\Ocal(\tw)}\cdot n$ a tree decomposition of width $\Ocal(\tw)$ of any graph $G$ with treewidth $\tw$. Since the running time of our algorithms dominates this function, we may assume that a nice tree decomposition of width $t = \Ocal(\tw)$
is given along with the input. 

\medskip
\noindent
\textbf{Exponential Time Hypothesis.} The \emph{Exponential Time Hypothesis} (\ETH) of Impagliazzo and Paturi~\cite{ImpagliazzoP01-ETH} implies that the 3-\textsc{Sat} problem on $n$ variables cannot be solved in time $2^{o(n)}$.
The Sparsification Lemma of Impagliazzo et al.~\cite{ImpagliazzoP01} implies that if the \ETH holds, then there is no algorithm solving a 3-\textsc{Sat} formula with $n$ variables and $m$ clauses in time $2^{o(n+m)}$. Using the terminology from Cygan et al.~\cite{CyganMPP17}, a 3-\textsc{Sat} formula $\varphi$ in conjunctive normal form is said to be \emph{clean} if each variable of $\varphi$ appears exactly three times, at least once positively and at least once negatively, and each clause of $\varphi$ contains two or three literals and does not contain twice the same variable. Cygan et al.~\cite{CyganMPP17} observed the following useful lemma.

\begin{lemma}[Cygan et al.~\cite{CyganMPP17}]\label{lem:clean3SAT}
The existence of an algorithm in time $2^{o(n)}$ deciding whether a clean 3-\textsc{Sat} formula with $n$ variables is satisfiable would violate the \ETH.
\end{lemma}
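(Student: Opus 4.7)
The plan is to reduce an arbitrary 3-\textsc{Sat} instance to a clean one while blowing up the number of variables by only a constant factor. Combined with the Sparsification Lemma of Impagliazzo et al.~\cite{ImpagliazzoP01}, this yields the desired conclusion: a hypothetical $2^{o(n')}$ algorithm for clean formulas with $n'$ variables would give a $2^{o(n+m)}$ algorithm for general 3-\textsc{Sat}, violating the \ETH.

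First I would apply the Sparsification Lemma to reduce to 3-\textsc{Sat} instances with $m = \Ocal(n)$, so the total number of literal occurrences is $\Ocal(n)$. Then I would exhaustively preprocess the formula using the following rules, each of which preserves equisatisfiability: discard tautological clauses (those containing both a variable and its negation); collapse duplicated literals inside the same clause; perform unit propagation on any resulting $1$-literal clauses; and whenever a variable has only one remaining occurrence, set it to satisfy its unique clause and delete that clause. The process terminates since each rule strictly decreases the total size of the formula. At the fixed point, every clause has two or three pairwise distinct literals (no variable twice), and every remaining variable occurs at least twice.

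Next comes the cleaning step. For each surviving variable $x$ occurring $k \geq 2$ times, I introduce $k$ fresh variables $x_1, \ldots, x_k$, one per occurrence, and replace the $i$-th occurrence of $x$ (respecting polarity) by $x_i$. To force $x_1 = \cdots = x_k$ in every satisfying assignment, I add the cycle of $2$-literal implication clauses
\[
(\neg x_1 \vee x_2),\ (\neg x_2 \vee x_3),\ \ldots,\ (\neg x_{k-1} \vee x_k),\ (\neg x_k \vee x_1),
\]
which is logically equivalent to the chain of implications $x_1 \Rightarrow x_2 \Rightarrow \cdots \Rightarrow x_k \Rightarrow x_1$, hence to the conjunction of equalities $x_1 = \cdots = x_k$. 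In the resulting formula $\varphi'$, each new variable $x_i$ appears exactly twice inside its cycle (once positively and once negatively) plus exactly once in the transformed original clause, totalling three occurrences with both polarities present, as the definition of cleanliness requires.

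Finally I would check the parameter blow-up. By construction $\varphi$ and $\varphi'$ are equisatisfiable, the formula $\varphi'$ is clean by the discussion above, and the number of variables of $\varphi'$ equals the number of literal occurrences in the preprocessed formula, which is $\Ocal(n)$ by sparsification. Hence a $2^{o(n')}$ algorithm for clean 3-\textsc{Sat} would translate into a $2^{o(n+m)}$ algorithm for general 3-\textsc{Sat}, contradicting the \ETH. The main obstacle worth handling with care is the interaction of the preprocessing rules, in particular checking that unit propagation and singleton elimination both preserve satisfiability and that their cascading terminates; everything else reduces to routine bookkeeping.
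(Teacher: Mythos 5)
Your proof is correct, and it follows what is essentially the standard route (Tovey-style variable splitting after sparsification) for establishing ETH-hardness of a restricted 3-\textsc{Sat} variant. Note that the paper itself does not supply a proof of Lemma~\ref{lem:clean3SAT}; it cites Cygan et al.~\cite{CyganMPP17}, whose argument uses the same two ingredients you do: the Sparsification Lemma to drive the total number of literal occurrences to $\Ocal(n)$, and then the replacement of each of a variable's $k\ge 2$ occurrences by a fresh variable tied to the others via a cycle of binary implication clauses $(\neg x_i \vee x_{i+1})$, which enforces $x_1=\cdots=x_k$ and gives every fresh variable exactly three occurrences with both polarities present. Your preprocessing pass (tautology removal, literal deduplication, unit propagation, singleton elimination) is exactly what is needed so that the split count $k$ is at least $2$ and no clause repeats a variable, and your blow-up accounting is right: the number of new variables equals the number of literal occurrences, which is $\Ocal(n)$ after sparsification. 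The only points to tidy up are trivial: handle the degenerate outcomes of preprocessing (an empty clause yields \textsc{no}, an empty formula yields \textsc{yes}, and variables with zero remaining occurrences are simply dropped), none of which affects the argument.
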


\section{Algorithms}
\label{sec:algos-no-colors}

In this section we present algorithms for \textsc{$H$-IS-Deletion} and \textsc{Colorful $H$-IS-Deletion}. We start in Subsection~\ref{sec:generic-algo} with a general dynamic programming algorithm that solves \textsc{$H$-IS-Deletion} and \textsc{Colorful $H$-IS-Deletion} in time $\Ocal^*(2^{\Ocal(t^{h-2})})$ for any graph $H$ on at least $h \geq 3$ vertices.
In Subsection~\ref{sec:cliques-indep-sets} we focus on hitting cliques and independent sets.


\subsection{A general dynamic programming algorithm}
\label{sec:generic-algo}

We present the algorithm for \textsc{$H$-IS-Deletion}, and then we discuss that essentially the same algorithm applies to \textsc{Colorful $H$-IS-Deletion} as well. Our algorithm to solve \textsc{$H$-IS-Deletion}  in time $\Ocal^*(2^{\Ocal(t^{h-2})})$ uses standard dynamic programming over a nice tree decomposition of the input graph; we refer the reader to~\cite{CyganFKLMPPS15} for a nice exposition. However, in order to achieve the claimed running time, we need to use a slightly non-trivial encoding in the tables, which we proceed to explain.

Let $|V(H)|=h$ and assume that we are given a nice tree decomposition of the input graph $G$ such that its bags contain at most $t$ vertices (in a tree decomposition of width $t$, the bags have size at most $t+1$, but to simplify the exposition we assume that they have size at most $t$, which does not change the asymptotic complexity of the algorithm).

Intuitively, our algorithm proceeds as follows. At each bag $X_w$ of the nice tree decomposition of $G$, a state is indexed by the intersection of the desired hitting set constructed so far with the bag, and the collection of {\sl proper} subgraphs of $H$ that occur as induced subgraphs in the graph obtained from $G_w$ after removing the current solution. In order to be able to proceed with the dynamic programming routine while keeping the complement of the hitting set $H$-free, we need to remember how these proper subgraphs of $H$ intersect with $X_w$, and this is the most expensive part of the algorithm in terms of running time. We encode this collection of rooted subgraphs (where the ``roots'' correspond to the vertices in $X_w$) of $H$ with an object $\Hcal_w$ that we call a \emph{rooted $H$-folio}, inspired by similar encodings in the context of graph minors~\cite{BasteST20-part1,AdlerDFST11}. Since we need to remember proper subgraphs of $H$ on at most $h-1$ vertices, and we have up to $t$ choices to root each of their vertices in the bag $X_w$, the number of rooted proper subgraphs of $H$ is at most $t^{h-1}$. Therefore, the number of rooted $H$-folios, each corresponding to a collection of rooted proper subgraphs of $H$, is bounded from above by $2^{t^{h-1}}$. This encoding naturally leads to a dynamic programming algorithm to solve \textsc{$H$-IS-Deletion}  in time $\Ocal^*(2^{\Ocal(t^{h-1})})$, where the hidden constants (but not the degree of the polynomial in $n$) may depend on $H$.

In order to further reduce the exponent to $h-2$, we use the following trick inspired by the dynamic programming algorithm of Bonomo-Braberman et al.~\cite{claw-free} to solve \textsc{$K_{1,3}$-IS-Deletion} in time $\Ocal^*(2^{\Ocal(t^{2})})$. The crucial observation is the following: the existence of proper induced subgraphs of $H$ that are {\sl fully} contained in the current bag $X_w$ can be checked {\sl locally} within that bag, without needing to root their vertices. That is, we distinguish these {\sl local occurrences} of proper induced subgraphs of $H$, and we encode them separately in $\Hcal_w$, without rooting their vertices in $X_w$. Note that the number of choices for those local occurrences depends only on $H$. In particular, since the proper subgraphs of $H$ have at most $h-1$ vertices, the previous observation implies that we never need to root exactly $h-1$ vertices of an induced subgraph of $H$, since such occurrences would be fully contained in $X_w$. This permits to improve the running time to $\Ocal^*(2^{\Ocal(t^{h-2})})$. The details follow.

Note that we may assume that $H$ has at least three vertices, as otherwise it is a clique or an independent set, and then \textsc{$H$-IS-Deletion} can be solved in single-exponential time by the algorithms in Subsection~\ref{sec:cliques-indep-sets}.

\begin{theorem}\label{thm:generic-algo-no-colors}
For every graph $H$ on $h\geq 3$ vertices, the {\sc $H$-IS-Deletion} problem can be solved in time $2^{\Ocal(t^{h-2})} \cdot n$, where $n$ and $t$ are the number of vertices and the treewidth of the input graph, respectively.
\end{theorem}

\begin{proof}
 As discussed in Section~\ref{sec:prelim}, we may assume that we are given are {\sl nice} tree decomposition $({\cal D},r,{\cal G})$ of $G$ of width $\Ocal(\tw(G))$. To simplify the exposition, suppose that the bags of ${\cal D}$ contain at most $t$ vertices, and recall that this assumption does not change the asymptotic complexity of the algorithm.

At each bag $X_w$ of the given nice tree decomposition of $G$, a valid state of our dynamic programming table is indexed by the following two objects:

\begin{itemize}
\item A subset $\hat{S}_w \subseteq X_w$ that corresponds to the intersection of the desired $H$-hitting set with the current bag. With this in mind, we say that a set $S_w \subseteq V_w$ is \emph{feasible for $\hat{S}_w$} if $G_w \setminus S_w$ is $H$-free and $S_w \cap X_w = \hat{S}_w$.
\item A rooted $H$-folio $\Hcal_w$, containing the following two collections of (rooted) proper induced subgraphs of $H$:

    \begin{itemize}
    \item The set $\Lcal_w$ of \emph{local} occurrences of proper induced subgraphs of $H$, consisting of the set of proper induced subgraphs of $H$ that occur in $G[X_w \setminus \hat{S}_w]$.
    \item The set $\Rcal_w$ of \emph{rooted} occurrences of proper induced subgraphs of $H$, consisting of a set of triples $(\tilde{H},R,\rho)$ where $\tilde{H}$ is a proper induced subgraph of $H$, $R \subseteq V(\tilde{H})$ is a set with $0 \leq |R| \leq h-2$ corresponding to the {\sl roots} of $\tilde{H}$ in $X_w \setminus \hat{S}_w$, and $\rho: R \to X_w \setminus \hat{S}_w$ is an injective function that maps each vertex in $R$ to its corresponding vertex in $X_w \setminus \hat{S}_w$.
    \end{itemize}
\end{itemize}

\noindent Note that the number of local occurrences of proper induced subgraphs of $H$ depends only on $H$, and that the number of tuples $(\tilde{H},R,\rho)$ of rooted occurrences is at most $2^{h} \cdot 2^{h-1}\cdot t^{h-2}$, and therefore the number of rooted $H$-folios is at most $2^{\Ocal(t^{h-2})}$, as desired.

We say that the rooted $H$-folio of a subgraph $G'_w \subseteq G_w$ is $\Hcal_w$ if the local and rooted occurrences of induced subgraphs of $H$ in $G'_w$ correspond exactly to the collections $\Lcal_w$ and $\Rcal_w$ of $\Hcal_w$, respectively. Our algorithm stores, for each state $(\hat{S}_w, \Hcal_w)$ of a node $w$ of a nice tree decomposition of $G$, the minimum size of a set $S_w \subseteq V_w$ feasible for $\hat{S}_w$ such that the rooted $H$-folio of $G_w \setminus S_w$ is  $\Hcal_w$, or $+\infty$ if such a set does not exist. We denote this value by ${\sf opt}(\hat{S}_w, \Hcal_w)$.

When $r$ is the root of the nice tree decomposition, note that the solution of the \textsc{$H$-IS-Deletion} problem in $G$ equals $\min_{\Hcal_r}\{ {\sf opt}(\emptyset, \Hcal_r)  \}$, where $\Hcal_r = (\Lcal_r , \Rcal_r )$ runs over all rooted $H$-folios such that $\Lcal_r = \emptyset$
and $ \Rcal_r$ contains the triples $(\tilde{H},\emptyset,\emptyset)$ for all proper induced subgraphs $\tilde{H}$ of $H$.

\medskip

We now show how these valid states and associated values can be computed recursively in a typical bottom-up fashion starting from the leaves, by distinguishing the distinct types of nodes in a nice tree decomposition. We let $V(H)= \{z_1,\ldots,z_h\}$. 

\medskip
\noindent\textbf{Leaf node}. The unique valid state is $(\emptyset, \emptyset)$ and ${\sf opt}(\emptyset, \emptyset)= 0$.

\medskip
\noindent\textbf{Introduce node}. Let $w$ be an introduce node with child $w'$ such that $X_w \setminus X_{w'} = \{v\}$. For each valid state $(\hat{S}_{w'}, \Hcal_{w'})$ for $w'$, with $\Hcal_{w'} = (\Lcal_{w'} , \Rcal_{w'})$, we generate the following valid states for $w$, depending on whether $v$ is included in the current partial hitting set in $X_w$ or not:

\begin{itemize}
\item $(\hat{S}_{w'} \cup \{v\}, \Hcal_{w'})$. In this case, we just include $v$ into the partial hitting set, hence the rooted $H$-folio remains the same. Therefore, ${\sf opt}(\hat{S}_{w'} \cup \{v\}, \Hcal_{w'}) = {\sf opt}(\hat{S}_{w'}, \Hcal_{w'}) + 1$.
\item $(\hat{S}_{w'}, \Hcal_{w})$ only if  $G[X_w \setminus \hat{S}_{w'}]$ is $H$-free, where $\Hcal_w = (\Lcal_w , \Rcal_w)$ is defined as follows:
\begin{itemize}
\item $\Lcal_w $ contains all the proper induced subgraphs of $H$ that occur in $G[X_w \setminus \hat{S}_{w'}]$. Note that this set can be computed in time $\Ocal(t^h)$, where the hidden constant depends on $h$.
\item In order to define the set of triples contained in $\Rcal_w$, we first check that $H$ does not occur when introducing $v$: if for some triple $(\tilde{H}',R',\rho') \in \Rcal_{w'}$, the graph obtained from $\tilde{H}'$ by adding a new vertex $u$ and an edge $\{u,z\}$ for every root vertex $z \in R'$ such that $\rho'(z)$ is adjacent to $v$ in $G$, is isomorphic to $H$, we discard the  state $(\hat{S}_{w'}, \Hcal_{w})$ from the table of $X_w$, and we move on to the next state for $w'$.
    If there is no such triple, we add the whole collection $\Rcal_{w'}$ to $\Rcal_w$. Moreover, we add to $\Rcal_w$ every triple $(\tilde{H},R,\rho)$  that can be obtained from a triple $(\tilde{H}',R',\rho') \in \Rcal_{w'}$ with $|V(\tilde{H}')| \leq h-2$, $|R'| \leq h-3$, $v \in {\sf im}(\rho)$, $|R|= |R'| + 1$, $\rho|_{R'}= \rho'$, $\tilde{H}$ is a proper induced subgraph of $H$, and $\tilde{H}'$ is isomorphic to $\tilde{H} \setminus \{ \rho^{-1}(v)\}$. That is, since in this case $v$ does not belong to the partial hitting set, we also add to $\Rcal_w$ any rooted occurrence that can be obtained from a rooted occurrence in $\Rcal_{w'}$ by adding vertex $v$ to the set of roots $R$ to form a larger $\tilde{H}$.
\end{itemize}
In this case we set ${\sf opt}(\hat{S}_{w'}, \Hcal_{w}) = {\sf opt}(\hat{S}_{w'}, \Hcal_{w'})$.
\end{itemize}

\medskip
\noindent\textbf{Forget node}. Let $w$ be a forget node with child $w'$ such that $X_{w'} \setminus X_{w} = \{v\}$. For each valid state $(\hat{S}_{w'}, \Hcal_{w'})$ for $w'$, with $\Hcal_{w'} = (\Lcal_{w'} , \Rcal_{w'})$, we generate the following valid states for $w$, depending on whether $v \in \hat{S}_{w'}$ or not:

\begin{itemize}
\item If $v \in \hat{S}_{w'}$ we add the state $(\hat{S}_{w'} \setminus \{v\}, \Hcal_{w'})$ and we set ${\sf opt}(\hat{S}_{w'} \setminus \{v\}, \Hcal_{w'}) = {\sf opt}(\hat{S}_{w'}, \Hcal_{w'})$. In this case, we just forget vertex $v$, which was in the solution, and $\Hcal_{w'}$ remains the same.

\item Otherwise, if $v \notin \hat{S}_{w'}$, we add the state $(\hat{S}_{w'}, \Hcal_w)$   where $\Hcal_w = (\Lcal_w , \Rcal_w)$ is defined as follows:
    \begin{itemize}
\item $\Lcal_w $ contains all the proper induced subgraphs of $H$ that occur in $G[X_w \setminus \hat{S}_{w'}]$. Again, this set can be computed locally in time  $\Ocal(t^h)$.

\item  $\Rcal_w$ contains every triple $(\tilde{H},R,\rho)$ that can be constructed by any of the following two operations:
    \begin{itemize}
    \item If there is a local occurrence $\tilde{H}' \in \Lcal_{w'}$ such that $v$ belongs to an induced $\tilde{H}'$-subgraph $F$ of $G[X_{w'} \setminus \hat{S}_{w'}]$, we add to $\Rcal_w$ the triple $(\tilde{H},R,\rho)$ defined as $\tilde{H}=\tilde{H}'$, $R = V(\tilde{H}') \setminus \{z\}$ where $z$ is the vertex of $\tilde{H}'$ mapped to $v$, and $\rho$ mapping every vertex of $R$ to their image in $F$. That is, if $v$ was part of a local occurrence for node $w'$, now this occurrence becomes a rooted one for node $w$, defined in the natural way.
    \item Let $(\tilde{H}',R',\rho') \in \Rcal_{w'}$ be a rooted occurrence in $\Rcal_{w'}$. We distinguish two cases:
    \begin{itemize}
    \item If $v \notin {\sf im}(\rho')$, we add $(\tilde{H}',R',\rho')$  to $\Rcal_w$.
    \item Otherwise, if $v \in {\sf im}(\rho')$, we add to to $\Rcal_w$ the triple $(\tilde{H},R,\rho)$ defined as $\tilde{H} = \tilde{H}'$, $R = R' \setminus \{\rho'^{-1}(v)\}$, and $\rho = \rho'|_{R}$. That is, we just remove the forgotten vertex $v$ from the root set of the corresponding rooted occurrence.
    \end{itemize}
    \end{itemize}
\end{itemize}
In this case we set ${\sf opt}(\hat{S}_{w'}, \Hcal_{w}) = {\sf opt}(\hat{S}_{w'}, \Hcal_{w'})$.
\end{itemize}

\medskip
\noindent\textbf{Join node}. Let $w$ be a join node with children $w_1$ and $w_2$. For each pair of valid states $(\hat{S}_{w_1}, \Hcal_{w_1})$ and $(\hat{S}_{w_2}, \Hcal_{w_2})$ for $w_1$ and $w_2$,  with $\Hcal_{w_1} = (\Lcal_{w_1} , \Rcal_{w_1})$ and $\Hcal_{w_2} = (\Lcal_{w_2} , \Rcal_{w_2})$, respectively, such that $\hat{S}_{w_1} = \hat{S}_{w_2}$ and $\Lcal_{w_1} = \Lcal_{w_2}$, we generate the valid state $(\hat{S}_{w_1}, \Hcal_{w})$ for $w$, where $\Hcal_{w} = (\Lcal_{w_1} , \Rcal_{w})$ and $\Rcal_{w}$ is defined as follows. For every pair of rooted triples $(\tilde{H}_1,R_1,\rho_1) \in \Rcal_{w_1}$ and $(\tilde{H}_2,R_2,\rho_2) \in \Rcal_{w_2}$ with $R_1 = R_2$, $\rho_1 = \rho_2$, and $(V(\tilde{H}_1) \setminus R_1) \cap (V(\tilde{H}_2) \setminus R_2) = \emptyset$ (recall that the vertices of $H$ are labeled, so this condition is well-defined), we add to $\Rcal_{w}$ the triple $(\tilde{H}_1 \cup \tilde{H}_2,R_1,\rho_1)$. That is, we just merge the rooted triples that coincide in $X_w = X_{w_1} = X_{w_2}$, by taking the union of the corresponding subgraphs of $H$.

Finally, for $(\hat{S}_{w_1}, \Hcal_{w})$ to be indeed a valid state for $w$, we have to check that an occurrence of $H$ has not been created  in those triples: if for some such a triple $(\tilde{H}_1 \cup \tilde{H}_2,R_1,\rho_1) \in \Rcal_{w}$, we have that $\tilde{H}_1 \cup \tilde{H}_2 = H$, we discard the state $(\hat{S}_{w_1}, \Hcal_{w})$ for $w$, and we move on to the next pair of valid states $(\hat{S}_{w_1}, \Hcal_{w_1})$ and $(\hat{S}_{w_2}, \Hcal_{w_2})$ for $w_1$ and $w_2$, respectively.

If the state $(\hat{S}_{w_1}, \Hcal_{w})$ has not been discarded, we set
$$
{\sf opt}(\hat{S}_{w_1}, \Hcal_{w}) = {\sf opt}(\hat{S}_{w_1}, \Hcal_{w_1}) + {\sf opt}(\hat{S}_{w_2}, \Hcal_{w_2}) - |\hat{S}_{w_1}|.
$$

\bigskip

In all cases, if distinct valid states of the child(ren) node(s) generate the same valid state $(\hat{S}_{w}, \Hcal_{w})$ at the current node $w$, we update ${\sf opt}(\hat{S}_{w}, \Hcal_{w})$ to be the minimum among all the obtained values, as usual. This concludes the description of the algorithm, whose correctness follows from the definition of the tables and the fact that the solution of the \textsc{$H$-IS-Deletion} problem on $G$ is computed at the root of the nice tree decomposition. Clearly, all the above operations can be performed at each node in time $2^{\Ocal(t^{h-2})}$, and the proof is complete by taking into account that we may assume that the given nice tree decomposition has $\Ocal(t \cdot n)$ nodes~\cite{Klo94}.
\end{proof}

A dynamic programming algorithm similar to the one provided in Theorem~\ref{thm:generic-algo-no-colors} can also solve the \textsc{Colorful $H$-IS-Deletion} problem  in time $2^{\Ocal(t^{h-2})} \cdot n$ for every graph $H$ on $h \geq 3$ vertices. Indeed, the algorithm remains basically the same, except that we have to keep track only of {\sl colorful} copies of proper subgraphs of $H$, and to discard only the states in which a {\sl colorful} occurrence of $H$ appears. In order to do that, in the tables of the dynamic programming algorithm we just need to replace rooted $H$-folios by \emph{rooted $\sigma$-$H$-folios}, defined in the natural way. Since the number of further computations at each node in order to verify that the colors match in the obtained rooted subgraphs of $H$  is a function dominated by $2^{\Ocal(t^{h-2})}$, we obtain the same asymptotic running time. We omit the details.

\begin{theorem}\label{thm:generic-algo-colors}
For every graph $H$ on $h\geq 3$ vertices, the {\sc Colorful $H$-IS-Deletion} problem can be solved in time $2^{\Ocal(t^{h-2})} \cdot n$, where $n$ and $t$ are the number of vertices and the treewidth of the input graph, respectively.
\end{theorem}

It is easy to check  that small adaptations of the algorithms of Theorems~\ref{thm:generic-algo-no-colors} and~\ref{thm:generic-algo-colors} also work for the (not necessarily induced) subgraph version of both problems. Nevertheless, the obtained running times never outperform those obtained by Cygan et al.~\cite{CyganMPP17} for those problems.

\subsection{Hitting cliques and independent sets}
\label{sec:cliques-indep-sets}


The following folklore lemma follows easily from the definition of tree decomposition.

\begin{lemma}\label{clique_bag}
Let $G$ be a graph and let ${\cal D}$ be a tree decomposition of $G$. Then every clique of $G$ is contained in some bag of ${\cal D}$.
\end{lemma}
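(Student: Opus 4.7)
The plan is to invoke the Helly property of subtrees of a tree. First, I would associate to each vertex $v$ of a clique $K$ of $G$ the set $T_v \subseteq V(T)$ of nodes whose bag contains $v$; by the third axiom of a tree decomposition, $T_v$ induces a nonempty connected subtree of $T$. Because $K$ is a clique, for any two distinct $u,v \in K$ the edge $\{u,v\} \in E(G)$ forces (by the second axiom) some bag to contain both endpoints, so $T_u \cap T_v \neq \emptyset$. In other words, $\{T_v\}_{v \in K}$ is a finite family of pairwise intersecting subtrees of $T$.

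It remains to show that any such family has a common node $w_0$, since then $X_{w_0} \supseteq K$ and the lemma follows. To this end, I would root $T$ at an arbitrary node, and for each $v \in K$ let $r_v$ denote the (unique) node of $T_v$ closest to the root. Picking $v^* \in K$ so that $r_{v^*}$ has maximum depth, I would argue as follows for every $v \in K$: choose $w \in T_v \cap T_{v^*}$; since both $r_v$ and $r_{v^*}$ are ancestors of $w$ in the rooted tree $T$ and the depth of $r_{v^*}$ is maximum, $r_v$ must be an ancestor of $r_{v^*}$. Hence $r_{v^*}$ lies on the (unique) $r_v$-to-$w$ path in $T$, which is entirely contained in the connected subtree $T_v$, and so $r_{v^*} \in T_v$. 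Thus $w_0 := r_{v^*}$ belongs to every $T_v$, as required.

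The only slightly delicate point is the Helly-type argument above; everything else is a direct unfolding of the tree-decomposition axioms. A possible alternative would be induction on $|K|$, but the rooting argument gives the result in one shot and avoids having to separately handle how a newly added clique vertex interacts with a bag obtained from the induction hypothesis.
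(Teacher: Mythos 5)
Your proof is correct. The paper itself gives no proof, stating only that the lemma ``follows easily from the definition of tree decomposition'' and treating it as folklore; the Helly-for-subtrees argument you give (with the rooting trick and picking the deepest root $r_{v^*}$) is exactly the standard way this folklore fact is established, and all the steps check out.
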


Note that if $H$ is a clique, then the \textsc{(Colorful) $H$-IS-Deletion} problem is the same as the \textsc{(Colorful) $H$-S-Deletion} problem. Cygan et al.~\cite{CyganMPP17} observed that, by Lemma~\ref{clique_bag}, in order to solve \textsc{(Colorful) $K_h$-IS-Deletion} it is enough to do the following: store, for every bag of a (nice) tree decomposition of the input graph, the subset of vertices of the bag that belongs to the partial hitting set, and check locally within the bag that the remaining vertices do not induce a $K_h$. A typical dynamic programming routine yields the following result\footnote{In fact, Cygan et al.~\cite{CyganMPP17} presented an algorithm only for \textsc{Colorful $K_h$-S-Deletion}, but the algorithm for \textsc{$K_h$-S-Deletion} is just a simplified version of  the colorful version, just by forgetting the colors.}.


\begin{theorem}[Cygan et al.~\cite{CyganMPP17}]
\label{Kh_single}
 For every integer $h\geq 1$, {\sc $K_h$-IS-Deletion} and {\sc Colorful $K_h$-IS-Deletion} can be solved in time $2^{\Ocal(t)} \cdot n$, where $n$ and $t$ are the number of vertices and the treewidth of the input graph, respectively.
\end{theorem}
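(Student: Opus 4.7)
The plan is to run a standard bottom-up dynamic programming along a nice tree decomposition $({\cal D},r,{\cal G})$ of width $\Ocal(t)$ of the input graph $G$, using the key structural fact that every clique of $G$ sits inside some bag (Lemma~\ref{clique_bag}). This means that a vertex set $X \subseteq V(G)$ hits every induced $K_h$ of $G$ if and only if, for every bag $X_w$ of ${\cal D}$, the set $X_w \setminus X$ contains no $K_h$. In the colorful variant, the analogous characterization holds: since any colorful $K_h$-subgraph is in particular a clique, it also lies entirely inside some bag, so it suffices to forbid colorful $K_h$'s inside each $X_w \setminus X$.

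For each node $w$, the state is simply a subset $\hat S_w \subseteq X_w$ intended to be $X \cap X_w$, subject to the local constraint that $G[X_w \setminus \hat S_w]$ contains no (colorful) $K_h$; this constraint can be verified in time $\Ocal(t^{h})$ by brute force within the bag. We define $\mathrm{opt}(\hat S_w)$ as the minimum size of a set $S_w \subseteq V_w$ such that $S_w \cap X_w = \hat S_w$ and $G_w \setminus S_w$ contains no (colorful) $K_h$. The number of states per node is at most $2^{t}$, which is the reason we reach single-exponential time.

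The transitions are the usual ones. For a leaf node, $\mathrm{opt}(\emptyset)=0$. For an introduce node $w$ with child $w'$ and introduced vertex $v$, we set $\mathrm{opt}(\hat S_{w'} \cup \{v\}) = \mathrm{opt}(\hat S_{w'})+1$, and, provided the local check at $X_w \setminus \hat S_{w'}$ passes, we also set $\mathrm{opt}(\hat S_{w'}) = \mathrm{opt}(\hat S_{w'})$. For a forget node with forgotten vertex $v$, we combine the two child states with and without $v$ by taking the minimum of $\mathrm{opt}(\hat S_{w'} \cup \{v\})$ and $\mathrm{opt}(\hat S_{w'})$. For a join node with children $w_1,w_2$ and common bag $X_w$, we combine identical states via
\[
\mathrm{opt}(\hat S_w) = \mathrm{opt}_{w_1}(\hat S_w) + \mathrm{opt}_{w_2}(\hat S_w) - |\hat S_w|,
\]
since the two sides share exactly $\hat S_w$. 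The answer to the problem is $\mathrm{opt}(\emptyset)$ at the root $r$.

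Correctness follows because (i) the local $K_h$-free condition at each bag is preserved along every transition and, by Lemma~\ref{clique_bag} (and its trivial colorful analogue), these local conditions together certify the global property of hitting every (colorful) induced $K_h$ of $G$; and (ii) the recurrences exactly reflect the decomposition of $V_w$ into the parts below each child plus the bag. The running time per node is $2^{\Ocal(t)}$ times a polynomial in $t$ and $h$ for the local clique enumeration, and since we may assume that the nice tree decomposition has $\Ocal(t \cdot n)$ nodes, the total running time is $2^{\Ocal(t)} \cdot n$, as claimed. The only real subtlety is the local $K_h$-check, which is immediate here, so no further machinery (such as the rooted $H$-folios used in Theorem~\ref{thm:generic-algo-no-colors}) is needed.
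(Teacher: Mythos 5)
Your proof is correct and follows essentially the same route the paper sketches (and attributes to Cygan et al.): use Lemma~\ref{clique_bag} to reduce the global (colorful) $K_h$-freeness condition to a per-bag check, then run the textbook treewidth DP whose states are the intersections of the hitting set with the current bag. The paper itself gives only this sketch, so your write-out of the leaf/introduce/forget/join recurrences is the expected elaboration; nothing is missing.
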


The case where $H$ is an independent set, which is \NP-hard by~\cite{LeYa80},  turns out to be more interesting. We proceed to present a single-exponential algorithm for \textsc{$I_h$-IS-Deletion}, and we remark that this algorithm does {\sl not} apply to the colorful version.

%


Note that \textsc{$I_2$-IS-Deletion} is the dual problem of {\sc Maximum Clique}, since a minimum $I_2$-hitting set is the complement of a maximum clique. This duality together with Lemma~\ref{clique_bag} yield the following key insight: in any graph $G$, after the removal of an optimal solution of \textsc{$I_2$-IS-Deletion}, all the remaining vertices are contained in a single bag of any tree decomposition of $G$. Our algorithm is based on a generalization of this property to any $h \geq 1$, stated in Lemma~\ref{bag_cover}, which gives an alternative way to exploit tree decompositions in order to solve the \textsc{$H$-IS-Deletion} problem.

We first need a technical lemma. A \emph{clique cover} of a graph $G$ is a collection of cliques of $G$ that cover $V(G)$, and its \emph{size} is the number of cliques in the cover.


\begin{lemma}\label{chordal_Ih-free}
Every $I_h$-free chordal graph $G$ admits a clique cover of size at most $h-1$.
\end{lemma}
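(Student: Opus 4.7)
The plan is to prove this by induction on $h$, exploiting the defining property of chordal graphs (existence of a simplicial vertex) already recalled in the preliminaries. The intuition is that removing the closed neighborhood of a simplicial vertex peels off a single clique while strictly decreasing the maximum independent set size, so each peeling step ``uses up'' one unit of the $I_h$-freeness budget.

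For the base cases, if $h=1$ then $G$ has no vertex, so the empty cover of size $0 = h-1$ works; if $h=2$ then $I_2$-freeness means every pair of vertices is adjacent, so $G$ itself is a clique and a single-clique cover of size $1 = h-1$ works.

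For the inductive step, I assume the statement holds for $h-1$ and let $G$ be an $I_h$-free chordal graph with $h \geq 3$. I may assume $G$ is nonempty, since otherwise the empty cover suffices. By the characterization of chordal graphs recalled in the preliminaries (namely, a chordal graph admits a perfect elimination ordering), $G$ contains a simplicial vertex $v$, whose closed neighborhood $N[v]$ induces a clique. Set $G' := G \setminus N[v]$. Then $G'$ is chordal, being an induced subgraph of a chordal graph (chordality is hereditary). The key claim is that $G'$ is $I_{h-1}$-free: otherwise, any independent set $S \subseteq V(G')$ of size $h-1$ would be entirely anticomplete to $v$ (since $S$ avoids $N[v]$), so $S \cup \{v\}$ would be an independent set of size $h$ in $G$, contradicting the hypothesis. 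By the inductive hypothesis, $G'$ admits a clique cover of size at most $h-2$, and adding the clique $N[v]$ yields a clique cover of $G$ of size at most $h-1$, as required.

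There is no real obstacle here: the argument is essentially a one-line induction. The only point to double-check is that an $I_1$-free graph really is the empty graph (so the base case is trivially handled) and that the inductive step works for $h \geq 3$, with the $h=2$ case serving as a separate base case to avoid invoking the induction with $h-1 = 1$ on a possibly nonempty remainder; the argument above handles this uniformly since $G'$ may well be empty after removing $N[v]$, which is consistent with the $h=2$ base case and the inductive step alike.
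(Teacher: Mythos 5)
Your proof is correct and follows essentially the same route as the paper's: induction on $h$, peeling off the clique $N[v]$ for a simplicial vertex $v$, observing that $G \setminus N[v]$ remains chordal (heredity) and becomes $I_{h-1}$-free, and combining the inductive cover with $N[v]$. The only differences are cosmetic — you spell out the $h=1$ and empty-graph corner cases and give the one-line justification for why $G \setminus N[v]$ is $I_{h-1}$-free, which the paper leaves implicit.
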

\begin{proof}
We prove the lemma by induction on $h$. For $h=2$, $G$ itself is a clique and the claim is trivial.
Suppose inductively that any $I_{h-1}$-free chordal graph admits a clique cover of size at most $h-2$, let $G$ be an $I_h$-free chordal graph, and let $v$ be a simplicial vertex of $G$. Since $N[v]$ is a clique and $G$ is $I_h$-free, it follows that $G \setminus N[v]$ is $I_{h-1}$-free.
Since being chordal is a hereditary property,
$G \setminus N[v]$  is an $I_{h-1}$-free chordal graph, so by induction $G \setminus N[v]$ admits a  clique cover of size at most $h-2$.  These $h-2$ cliques together with $N[v]$ define a clique cover of $G$ of size at most $h-1$.
\end{proof}

\begin{lemma}\label{bag_cover}
Let $h\geq 2$ be an integer, let $G$ be a graph, let ${\cal D}$ be a tree decomposition of $G$, and let $S$ be any solution for \textsc{$I_h$-IS-Deletion} on $G$.
Then there are at most $h-1$ bags $X_1,X_2,\ldots,X_{h-1}$ of ${\cal D}$ such that $V(G)\setminus S\subseteq \bigcup_{i \in [h-1]}X_{i}$.
\end{lemma}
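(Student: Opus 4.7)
The plan is to combine Lemma~\ref{chordal_Ih-free} (clique cover of $I_h$-free chordal graphs) with Lemma~\ref{clique_bag} (cliques live in bags) via a chordal completion of $G \setminus S$ that is compatible with the tree decomposition $\mathcal{D}$.

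First I would set $G' := G \setminus S$ and consider the restriction $\mathcal{D}' = \{X_w \setminus S : w \in V(T)\}$, which is clearly a tree decomposition of $G'$ on the same tree $T$. Next I would form the \emph{chordal completion with respect to $\mathcal{D}'$}, that is, the graph $G''$ obtained from $G'$ by adding an edge between every pair of distinct vertices that share a bag of $\mathcal{D}'$. A standard (and well-known) fact is that this graph $G''$ is chordal, and $\mathcal{D}'$ is still a tree decomposition of $G''$ since by construction every edge of $G''$ is contained in some bag. The key observation is that since $G''$ is a supergraph of $G'$ on the same vertex set, every independent set of $G''$ is also an independent set of $G'$; as $S$ hits all induced copies of $I_h$ in $G$, the graph $G'$ is $I_h$-free, and hence so is $G''$.

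Now I would apply Lemma~\ref{chordal_Ih-free} to the $I_h$-free chordal graph $G''$ to obtain a clique cover $C_1, \ldots, C_{h-1}$ of $G''$ (with at most $h-1$ cliques). Each $C_i$ is a clique in $G''$, so by Lemma~\ref{clique_bag} applied to the tree decomposition $\mathcal{D}'$ of $G''$, there exists a bag $X_{w_i} \setminus S$ of $\mathcal{D}'$ containing $C_i$, which means $C_i \subseteq X_{w_i}$ in the original decomposition $\mathcal{D}$. Taking the union,
\[
V(G) \setminus S \;=\; V(G'') \;=\; \bigcup_{i=1}^{h-1} C_i \;\subseteq\; \bigcup_{i=1}^{h-1} X_{w_i},
\]
which is the desired conclusion.

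The only delicate point is the construction and use of the chordal completion: one must justify that adding a clique on every bag of $\mathcal{D}'$ produces a chordal graph (a classical result) and remains an $I_h$-free graph (monotone under edge additions). Both are routine, so the lemma follows essentially by chaining the two previous lemmas through this completion.
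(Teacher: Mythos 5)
Your proof is correct and follows essentially the same route as the paper: both build the chordal completion induced by the tree decomposition, observe that it is an $I_h$-free chordal graph, and chain Lemma~\ref{chordal_Ih-free} with Lemma~\ref{clique_bag}. The only cosmetic difference is the order of operations (you delete $S$ and then complete, the paper completes $G$ and then deletes $S$), but the resulting graph is the same.
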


\begin{proof}
Let ${\cal D}$ be a tree decomposition of $G$, let $S$ be a solution for \textsc{$I_h$-IS-Deletion} on $G$, and let $G^{\star}$ be the graph obtained from $G$ by adding an edge between any pair of vertices contained in the same bag of ${\cal D}$.
Note that $G^{\star}$ is a chordal graph, and that ${\cal D}$ is also a tree decomposition of $G^{\star}$.
Since being a chordal graph is a hereditary property, it follows that $G^{\star}\setminus S$ is chordal. Since $G \setminus S$ is $I_h$-free, and the property of being $I_h$-free is closed under edge addition, we have that $G^{\star} \setminus S$ is also $I_h$-free. Thus, $G^{\star}\setminus S$ is an $I_h$-free chordal graph, and Lemma~\ref{chordal_Ih-free} implies that $G^{\star}\setminus S$ admits a clique cover of size at most $h-1$. Since any clique in $G^{\star}\setminus S$ is also a clique in $G^{\star}$, and ${\cal D}$ is a tree decomposition of $G^{\star}$,  Lemma~\ref{clique_bag}  implies that every clique of  $G^{\star}\setminus S$ is contained in some bag of ${\cal D}$, and therefore there are at most $h-1$ bags of ${\cal D}$ that cover all vertices in $V(G^{\star}) \setminus S = V(G)\setminus S$.
\end{proof}

Recall that \textsc{$I_h$-IS-Deletion}  is \NP-hard even for $h=2$, thus the problem cannot be solved in time $n^{f(h)}$ for any function $f$, unless ${\sf P}=\NP$.

\begin{theorem}\label{thm:algo-Is}
 For every integer $h\geq 1$, {\sc $I_h$-IS-Deletion} can be solved in time $2^{\Ocal(t)} \cdot n^{h}$, where $n$ and $t$ are the number of vertices and the treewidth of the input graph, respectively.
 \end{theorem}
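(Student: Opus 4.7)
The plan is to use Lemma~\ref{bag_cover} as a structural guide and reduce the problem to inspecting only $\Ocal(n^{h-1})$ ``small'' candidate sets. Concretely, let $({\cal D},r,{\cal G})$ be the nice tree decomposition of $G$ of width $\Ocal(t)$ that, as argued in Section~\ref{sec:prelim}, can be assumed to come with the input and to have $\Ocal(n)$ nodes. Minimizing $|S|$ subject to $G\setminus S$ being $I_h$-free is equivalent to maximizing $|V(G)\setminus S|$ subject to $G[V(G)\setminus S]$ being $I_h$-free, so I will search for a largest $I_h$-free induced subgraph of $G$.

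First, by Lemma~\ref{bag_cover}, for \emph{any} optimal solution $S$ there exist (at most) $h-1$ bags $X_{1},\ldots,X_{h-1}$ of $\cal D$ whose union contains $V(G)\setminus S$. This is the key structural fact: it tells us that the surviving vertices always live inside a set of size at most $(h-1)(t+1)=\Ocal(t)$ coming from a small number of bags of the given decomposition. Hence it suffices to iterate over all such small unions.

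The algorithm is then the following. Enumerate every tuple $(X_1,\ldots,X_{h-1})$ of at most $h-1$ bags of $\cal D$; there are $\Ocal(n^{h-1})$ such tuples. For each tuple, form $W:=X_1\cup\cdots\cup X_{h-1}$, so that $|W|\leq (h-1)(t+1)$. For each subset $W'\subseteq W$, test in time $|W|^{\Ocal(h)}$ whether $G[W']$ is $I_h$-free by going through all $h$-tuples of vertices of $W'$; record the largest such $W'$. Finally, output $n-\max|W'|$, where the maximum is taken over all enumerated tuples.

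Correctness follows directly from Lemma~\ref{bag_cover}: every set $W'\subseteq W$ with $G[W']$ being $I_h$-free yields a feasible hitting set $S=V(G)\setminus W'$ of size $n-|W'|$, and conversely an optimal $S^\ast$ is captured by at least one choice of $h-1$ bags, so the corresponding $W'^\ast=V(G)\setminus S^\ast$ is among the subsets examined. The running time is bounded by
\[
\Ocal(n^{h-1})\cdot 2^{|W|}\cdot |W|^{\Ocal(h)} \;=\; 2^{\Ocal(t)}\cdot n^{\,h},
\]
where the $n^{h}$ absorbs the polynomial overhead in constructing and testing each subset (and the factor coming from the size of $\cal D$). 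I do not expect any real obstacle here: Lemma~\ref{bag_cover} does essentially all the work, and the remaining steps are a direct enumeration; the only point to be slightly careful about is to argue that ``at most $h-1$ bags'' can be enumerated in $\Ocal(n^{h-1})$ time (padding the tuple with a fixed bag if fewer than $h-1$ bags suffice) and that the brute-force search inside each $W$ stays within $2^{\Ocal(t)}$ since $|W|=\Ocal(t)$.
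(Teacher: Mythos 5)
Your proof is correct and follows the same route as the paper's: it uses Lemma~\ref{bag_cover} to reduce the search to unions of at most $h-1$ bags, enumerates the $\Ocal(n^{h-1})$ tuples of bags, and then brute-forces over all subsets of the resulting $\Ocal(t)$-vertex set. The only (inessential) difference is how you test $I_h$-freeness of $G[W']$: you enumerate all $h$-tuples in $W'$ in time $|W|^{\Ocal(h)} = t^{\Ocal(1)}$, while the paper runs a treewidth-based maximum-independent-set routine on $G[\bar S]$ in time $2^{t}\cdot t^{\Ocal(1)}\cdot n$; both fit comfortably inside the claimed $2^{\Ocal(t)}\cdot n^{h}$ bound, and your check is arguably the simpler one.
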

\begin{proof}
For $h=1$ the problem can be trivially solved in linear time, so assume $h \geq 2$.
Let ${\cal D}$ be a tree decomposition of $G$ with width $t$, and let $S$ be an (unknown) optimal solution for \textsc{$I_h$-IS-Deletion} on $G$. By Lemma~\ref{bag_cover}, there are at most $h-1$ bags $X_1,X_2,\ldots,X_{h-1}$ of ${\cal D}$ such that $V(G)\setminus S\subseteq \bigcup_{i \in [h-1]}X_{i}$. Since we may assume that ${\cal D}$ has $\Ocal(n)$ nodes~\cite{Klo94}, we can enumerate the candidate sets of  bags $X_1,X_2,\ldots,X_{h-1}$ in time $\Ocal(n^{h-1})$. For each such fixed  set $X_1,X_2,\ldots,X_{h-1}$, we generate all subsets $\bar{S} \subseteq \bigcup_{i \in [h-1]}X_{i}$, which are at most $2^{(h-1)(t+1)}$ many,
and for each $\bar{S}$ we check whether the graph $G[\bar{S}]$ is $I_h$-free, in time $2^{t}\cdot t^{\Ocal(1)}\cdot n$, by computing a maximum independent set of $G[\bar{S}]$ using dynamic programming based on treewidth~\cite{CyganFKLMPPS15} (note that having treewidth at most $t$ is a hereditary property).
Note that, by Lemma~\ref{bag_cover}, there exists some of the considered sets $\bar{S}$ such that $V(G) \setminus \bar{S} = S$, and therefore an optimal solution $S$ of \textsc{$I_h$-IS-Deletion} on $G$ can be found in time $\Ocal(n^{h-1} \cdot 2^{(h-1)(t+1)} \cdot 2^{t}\cdot t^{\Ocal(1)}\cdot n) = 2^{\Ocal(t)} \cdot n^{h}$ , as claimed.
\end{proof}

We would like to mention that the approach used in the algorithm of Theorem~\ref{thm:algo-Is} does not seem to be easily applicable to the colorful version of the problem. Indeed, the colored version of Lemma~\ref{chordal_Ih-free} fails: removing a clique from a $\sigma$-$I_h$-free chordal graph does not necessarily yield a $\sigma$-$I_{h-1}$-free chordal graph, and the inductive argument does not apply.

\smallskip

To conclude this section, note that,  for any graph $H$ and any instance $(G,\sigma)$ of \textsc{Colorful $H$-IS-Deletion}, any edge between two vertices $u,v$ with $\sigma(u)=\sigma(v)$ can be safely deleted without affecting the instance. Hence, if $H=K_2$ we can assume that the input graph is bipartite, and therefore the \textsc{Colorful $K_2$-IS-Deletion} problem (where the goal is to hit all edges) is equivalent to computing a minimum vertex cover in a bipartite graph, which can be done in polynomial time. Similarly, the \textsc{Colorful $I_2$-IS-Deletion} problem can also be solved in polynomial time, by computing a minimum vertex cover in the bipartite complement of the input graph. This is in sharp contrast to the uncolored version, where both problems are \NP-hard~\cite{LeYa80}.



\section{Lower bounds for \textsc{$H$-IS-Deletion}}
\label{sec:lower-bounds-no-colors}

In this section we present lower bounds for the \textsc{$H$-IS-Deletion} problem. Our reductions will be from the 3-\textsc{Sat} problem restricted to clean formulas (see Section~\ref{sec:prelim} for the definition), and are strongly inspired by a reduction of
Cygan et al.~\cite[Theorem 4]{CyganMPP17} for the \textsc{$H$-S-Deletion} problem when $H$ is the graph obtained from $K_{2,h}$ by attaching a triangle to each of the two vertices of degree $h$. More precisely, the reduction from the 3-\textsc{Sat} problem restricted to clean formulas of Cygan et al.~\cite[Theorem 4]{CyganMPP17} is based on a frame graph that is a simplified version of the general one that we define below, but that enjoys its essential properties, namely that each occurrence of the forbidden (induced) graph $H$ corresponds to a satisfying variable/clause pair. Our technical contribution is to enhance this basic frame graph in order to deal with different forbidden subgraphs $H$, in particular (as discussed in detail below) by adding edges inside the central part, redefining the  ``attached graphs'' $L$, changing the adjacencies given by the functions $f_{C,\ell}$ defined below, or adding a new vertex set $T$ into the central part (cf. Figure~\ref{fig:frame}).

\smallskip

We start by presenting the general frame of the reductions together with some generic properties that our eventual instances of  \textsc{$H$-IS-Deletion} will satisfy, which allow to prove in a unified way (cf. Lemma~\ref{lem:properties}) the equivalence of the instances. Variations of this general frame will yield the concrete reductions for distinct graphs $H$ (cf. Theorems~\ref{thm:lower-bound-no-colors},~\ref{thm:lower-bound-no-colors-WEAKER},~\ref{thm:lower-bound-no-colors-2}, and~\ref{thm:lower-bound-complete-bipartite}).

\medskip

\noindent \textbf{General frame of the reductions}.  Given a clean  3-\textsc{Sat} formula $\varphi$ with $n$ variables and $m$ clauses, we proceed to build a so-called \emph{frame graph} $F_{H,\varphi}$. For each graph $H$ considered in the reductions, $F_{H,\varphi}$ will be enhanced with additional vertices and edges, obtaining a graph $G_{H,\varphi}$ that will be the constructed instance of the \textsc{$H$-IS-Deletion} problem.

Let $h$ be an integer depending on $H$, to be specified in each particular reduction,
  let $s$ be the smallest positive integer such that $s^h \geq 3n$, and note that $s = \Ocal(n^{1/h})$. We introduce a set of vertices $M = \{ w_{i,j} \mid i \in [s], j \in [h]\}$, which we call the \emph{central part} of the frame. One may think of this set $M$ as a matrix with $s$ rows and $h$ columns.  We will sometimes add an extra set $T$ of vertices to the central part, with $|T|$ depending only on $H$, obtaining an \emph{enhanced central part} $M' = M \cup T$.

 Let $L$ be a graph, to be specified according to each particular considered graph $H$. By \emph{attaching a copy of $L$ between two vertices $u,v \in V(F_{H,\varphi})$} we mean adding a new copy of $L$, choosing two arbitrary distinct vertices of $L$, and identifying them with $u$ and $v$, respectively.

For each variable $x$ of $\varphi$ and for each clause $C$ containing $x$ in a literal $\ell \in \{x, \bar{x}\}$, we add to $F_{\varphi}$ a new vertex $a_{x,C, \ell}$. We also introduce another ``dummy'' vertex $a_x$. Since $\varphi$ is clean, we have introduced four vertices in $F_{H,\varphi}$ for each variable $x$. Let $a_{x,C_1, \ell}$, $a_{x,C_2, \bar{\ell}}$, $a_{x,C_3, \ell}$, $a_{x}$ be the four introduced vertices (recall that $x$ appears at least once positively and negatively in $\varphi$). We attach a copy of $L$ between the following four pairs of vertices: $(a_{x,C_1, \ell}, a_{x,C_2, \bar{\ell}})$, $(a_{x,C_2, \bar{\ell}}, a_{x,C_3, \ell})$, $(a_{x,C_3, \ell}, a_x)$, and $(a_x, a_{x,C_1, \ell})$. We denote by $A$ the union of all the vertices in these variable gadgets.

For each clause $C$ of $\varphi$ and for each literal $\ell$ in $C$, we add to $F_{\varphi}$ a new vertex $b_{C, \ell}$. Since $\varphi$ is clean, we have introduced two or three vertices in $F_{H,\varphi}$ for each clause $C$. We attach a copy of $L$ between every pair of these vertices. We denote by $B$ the union of all the vertices in these clause gadgets. This concludes the construction of the frame $F_{H,\varphi}$; cf. Figure~\ref{fig:frame}.

\begin{figure}[ht]
\begin{center}
\includegraphics[scale=1.1]{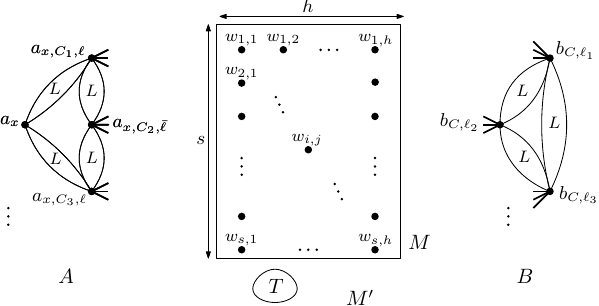}
\end{center}
\caption{Illustration of the general frame graph $F_{H,\varphi}$.}
\label{fig:frame}
\end{figure}

\noindent In all our reductions, the graph $G_{H,\varphi}$ will satisfy the following property:

 \begin{itemize}
\item[]{\sf P1}: All the connected components of $G_{H,\varphi} \setminus M'$ are of size bounded by a function of $H$.
\end{itemize}



\noindent Also, in all our reductions the budget that we set for the solution of \textsc{$H$-IS-Deletion} on $G_{H,\varphi}$ is $k:= 2n + \sum_{C \in \varphi}(|C| -1) = 5n-m$, where $|C|$ denotes the number of literals in clause $C$.  For each fixed graph $H$, the choice of $k$, the edges within $M'$, and the edges between $M'$ and the sets $A,B$ will force the following behavior in $G_{H,\varphi}$:

 \begin{itemize}
\item[]{\sf P2}: For each gadget corresponding to a variable $x$, at least one of the pairs $(a_{x,C_1,\ell},a_{x,C_3,\ell})$\\
     \textcolor{white}{{\sf P2}: }and $(a_x,a_{x,C_2,\bar{\ell}})$ needs to be in the solution and, for each gadget corresponding to a\\
     \textcolor{white}{{\sf P2}: }clause $C$, at least $|C|-1$ vertices in the set $\{b_{C,\ell} \mid \ell \in C\}$ need to be in the solution.
\end{itemize}

\noindent The above property together with the choice of $k$ imply that the budget is {\sl tight}: exactly one of the pairs $(a_{x,C_1,\ell}, a_{x,C_3,\ell})$ and $(a_x,a_{x,C_2,\bar{\ell}})$ is in the solution, thereby defining the {\sf true}/{\sf false} assignment of variable $x$; and exactly one of the vertices in $\{b_{C,\ell} \mid \ell \in C\}$ is {\sl not} in the solution, corresponding to a satisfied literal in $C$. More precisely, our graph $G_{H,\varphi}$ will satisfy the following key property:

 \begin{itemize}
\item[]{\sf P3}: Let $X \subseteq V(G_{H,\varphi})$ contain exactly one of $(a_{x,C_1,\ell}, a_{x,C_3,\ell})$ and $(a_x,a_{x,C_2,\bar{\ell}})$ for each\\
     \textcolor{white}{{\sf P3}: }variable $x$, and exactly $|C|-1$ vertices in $\{b_{C,\ell} \mid \ell \in C\}$ for each clause $C$. If $G_{H,\varphi} \setminus X$\\
     \textcolor{white}{{\sf P3}: }contains $H$  as an induced subgraph, then it has an occurrence of $H$ as an induced\\ \textcolor{white}{{\sf P3}: }subgraph containing exactly one vertex $a_{x,C,\ell} \in A$ and exactly one vertex $b_{C',\ell'} \in B$,\\ \textcolor{white}{{\sf P3}: }with $(C, \ell) = (C', \ell')$.
       Moreover, each such a pair of vertices gives rise to an occurrence\\ \textcolor{white}{{\sf P3}: }of $H$ in $G_{H,\varphi} \setminus X$.
\end{itemize}


\noindent Note that property~{\sf P3} states that the occurrences of $H$  described above are ``representative'' of all occurrences of $H$, in the sense that it is enough that the set $X$ hits these particular occurrences in order to guarantee that $G_{H,\varphi} \setminus X$ contains no occurrence of $H$ at all. We now show that the above three properties are enough to construct the desired reductions.

\begin{lemma}\label{lem:properties}
Let $H$ be a fixed graph and, given a clean 3-{\sc Sat} formula $\varphi$, let $G_{H,\varphi}$ be a graph constructed starting from the frame graph $F_{H,\varphi}$ described above, where the central part $M$ has $h$ columns for some constant $h \geq 1$ depending on $H$. If $G_{H,\varphi}$ satisfies properties~{\sf P1}, {\sf P2}, and~{\sf P3}, then the  {\sc $H$-IS-Deletion} problem cannot be solved in time $\Ocal^*(2^{o(t^{h})})$ unless the \ETH fails, where $t$ is the width of a given tree decomposition of the input graph.
\end{lemma}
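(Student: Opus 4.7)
The plan is to reduce from clean 3-\textsc{Sat} via the construction $\varphi \mapsto G_{H,\varphi}$, and to establish two facts: (i) the treewidth of $G_{H,\varphi}$ is $\Ocal(s) = \Ocal(n^{1/h})$, and (ii) $\varphi$ is satisfiable if and only if $G_{H,\varphi}$ admits an $H$-hitting set of size at most $k=5n-m$. Since $G_{H,\varphi}$ has polynomial size in $n$ (the central part has size $\Ocal(sh)$, the variable and clause gadgets contribute $\Ocal(n+m)$ vertices, and every copy of $L$ has constant size), combining (i) and (ii) with Lemma~\ref{lem:clean3SAT} yields the desired lower bound: an algorithm solving \textsc{$H$-IS-Deletion} in time $\Ocal^*(2^{o(t^h)})$ would decide the satisfiability of $\varphi$ in time $\Ocal^*(2^{o((n^{1/h})^h)}) = \Ocal^*(2^{o(n)})$, contradicting~\ETH.

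For (i), I would invoke~{\sf P1}: since every connected component of $G_{H,\varphi} \setminus M'$ has size bounded by a function of $H$ only, $G_{H,\varphi} \setminus M'$ has treewidth $\Ocal(1)$. Taking such a constant-width tree decomposition and adding $M'$ to every bag yields a tree decomposition of $G_{H,\varphi}$ of width $|M'| + \Ocal(1) = sh + |T| + \Ocal(1) = \Ocal(s)$, as $h$ and $|T|$ depend only on $H$.

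For (ii), I would first use~{\sf P2} together with the tightness of the budget to pin down the structure of any hitting set $X$ of size at most $k$. Property~{\sf P2} forces $X$ to contain at least two vertices per variable gadget and at least $|C|-1$ vertices per clause gadget, totalling at least $2n + \sum_{C \in \varphi}(|C|-1) = 5n-m = k$ vertices; hence all these inequalities are equalities and $X$ has exactly the shape required in the hypothesis of~{\sf P3}. By~{\sf P3}, such an $X$ hits every induced $H$-subgraph of $G_{H,\varphi}$ if and only if for every triple $(x,C,\ell)$ with $\ell$ a literal of $x$ appearing in $C$, at least one of $a_{x,C,\ell}$ or $b_{C,\ell}$ belongs to $X$.

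To translate this condition into satisfiability, observe that the two candidate pairs per variable $x$ partition its four gadget vertices by literal sign: one pair collects the two $a_{x,\cdot,\ell}$-vertices carrying the ``majority'' literal $\ell$, while the other contains $a_x$ together with the single $a_{x,\cdot,\bar{\ell}}$-vertex. I would then define $\alpha(x)$ as the truth assignment that makes the literal of the selected pair true, so that $a_{x,C,\ell'}\in X$ if and only if $\alpha$ makes $\ell'$ true. Letting $\ell_C^*$ denote the unique literal with $b_{C,\ell_C^*}\notin X$ for each clause $C$, the constraint from~{\sf P3} becomes exactly $\alpha(\ell_C^*)=\true$, so $X$ is a valid hitting set of size $k$ precisely when $\alpha$ satisfies every clause of $\varphi$; the converse direction is handled by reversing the recipe from a satisfying assignment. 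The main load-bearing step throughout is property~{\sf P3} itself, which guarantees that no extra induced $H$-subgraphs are produced by the interaction between the central part $M'$ and the gadgets beyond the expected $(a_{x,C,\ell}, b_{C,\ell})$ pairs; verifying~{\sf P3} will be the main technical task for each concrete choice of $H$ in the subsequent theorems.
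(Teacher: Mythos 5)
Your proposal is correct and follows essentially the same route as the paper's proof: use~{\sf P1} to obtain a tree decomposition of width $\Ocal(s)=\Ocal(n^{1/h})$, use~{\sf P2} together with the tight budget $k=5n-m$ to pin down the exact shape of any size-$\leq k$ hitting set, and then use~{\sf P3} to reduce the hitting condition to a variable assignment via the correspondence between the selected pair in each variable gadget and the truth value of the corresponding literal. The only cosmetic difference is that the paper writes out the forward and reverse directions of the equivalence explicitly, whereas you compress both into a single biconditional ``$a_{x,C,\ell'}\in X$ iff $\alpha(\ell')$ is true''; the substance is identical.
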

\begin{proof}
Since $H$ is a fixed graph, property~{\sf P1} implies that we can easily construct in polynomial time a tree decomposition of $G_{H,\varphi}$ of width $\Ocal(|M'|) = \Ocal(|M|)= \Ocal(s) = \Ocal (n^{1/h})$. Let  $t$ be the width of the constructed tree decomposition of $G_{H,\varphi}$. We set $k:=5n-m$, where $n$ and $m$ are the number of variables and clauses of $\varphi$, respectively. We claim that $\varphi$ is satisfiable if and only if $G_{H,\varphi}$ has a solution of \textsc{$H$-IS-Deletion} of size at most $k$. This will conclude the proof of the lemma, since
an algorithm in time $\Ocal^*(2^{o(t^{h})})$ to solve \textsc{$H$-IS-Deletion} on $G_{H,\varphi}$ would imply the existence of an algorithm in time $2^{o(n)}$ to decide whether $\varphi$ is satisfiable, which would contradict the \ETH by Lemma~\ref{lem:clean3SAT}.

Suppose first that $\alpha$ is an assignment of the variables that satisfies all the clauses in $\varphi$, and we define a set $X \subseteq V(G_{H,\varphi})$ as follows. For each variable $x$, add to $X$ all vertices $a_{x,C,\ell}$ such that $\alpha(\ell)$ is {\sf true}. If only one vertex was added in the previous step, add to $X$ vertex $a_x$ as well. For each clause $X$, choose a literal $\ell$ that satisfies $C$, and add to $X$ the set $\{b_{C,\ell'} \mid \ell' \neq \ell\}$. By construction we have that $|X|=k$, and property~{\sf P3} guarantees that $H$ does not occur in $G_{H,\varphi} \setminus X$ as an induced subgraph.

Conversely, suppose that there exists $X \subseteq V(G_{H,\varphi})$ with $|X| \leq k$ such that $G_{H,\varphi} \setminus X$ does not contain $H$ as an induced subgraph. By property~{\sf P2} and the choice of $k$, $X$ contains exactly one of $(a_{x,C_1,\ell}, a_{x,C_3,\ell})$ and $(a_x,a_{x,C_2,\bar{\ell}})$ for each variable $x$, and exactly $|C|-1$ vertices in $\{b_{C,\ell} \mid \ell \in C\}$ for each clause $C$. We define the following assignment $\alpha$ of the variables: for each variable $x$, let $\ell \in \{x, \bar{x}\}$ such that $a_{x,C,\ell} \in X$ for some clause $C$. Then we set $\alpha(x)$ to {\sf true} if $\ell = x$, and to \false if $\ell = \bar{x}$. By the above discussion, this is a valid assignment. Consider a clause $C$ of $\varphi$, and let $\ell$ be the literal in $C$ such that $b_{C,\ell} \notin X$. Property~{\sf P3} and the hypothesis that $X$ is a solution imply that there exists a variable $x \in \{\ell, \bar{\ell}\}$ such that $a_{x,C,\ell} \in X$, as otherwise there would be an occurrence of $H$ in $G_{H,\varphi} \setminus X$. By the definition of $\alpha$, necessarily $\alpha(\ell)$ is {\sf true}, and therefore $\alpha$ satisfies $C$. Since this argument holds for every clause, we conclude that $\varphi$ is satisfiable.
\end{proof}

We now proceed to describe concrete reductions for several instantiations of $H$. In order to add edges between the enhanced central part $M'$ and the sets $A,B$, we use the following nice trick introduced in~\cite{CyganMPP17}. To each pair $(C,\ell)$, where $C$ is a clause of $\varphi$ and $\ell$ is a literal in $C$, we assign a function $f_{C,\ell}: [h] \to [s]$. Note that there are $s^h$ many such functions, and recall that $s$ has been chosen so that $s^h \geq 3n$. We assign these functions in such a way that $f_{C,\ell}\neq f_{C',\ell'}$ whenever $(C,\ell) \neq (C',\ell')$; note that this is possible by the choice of $s$ and the fact that, since $\varphi$ is clean, each clause contains at most three literals. We assume henceforth  that these functions are fixed.

We start with the following result that provides a tight lower bound for a graph that is  very ``close'' to a clique, namely a clique minus one edge.


\begin{theorem}\label{thm:lower-bound-no-colors}
For any integer $h \geq 1$, the  {\sc $(K_{h+2}-e)$-IS-Deletion} problem cannot be solved in time $\Ocal^*(2^{o(t^{h})})$ unless the \ETH fails, where $t$ is the width of a given tree decomposition of the input graph.
\end{theorem}
\begin{proof}
We first treat the case $h=1$ separately, by presenting a polynomial-time reduction from the \textsc{Vertex Cover} problem, which is well-known not to be solvable, assuming the \ETH, in time $2^{o(n+m)}$ on  graphs with $n$ vertices and $m$ edges~\cite{ImpagliazzoP01-ETH,ImpagliazzoP01}. In fact, we will prove the stronger lower bound of $\Ocal^*(2^{o(n)})$, which implies the lower bound $\Ocal^*(2^{o(t)})$ corresponding to the case $h=1$ claimed in the statement of the theorem. Note that, in the case $h=1$,  $K_3 - e = P_3$. Given an instance $G$ of \textsc{Vertex Cover}, let $G'$ be obtained from $G$  by attaching a private neighbor to every vertex of $G$, and note that $|V(G')| = 2|V(G)|$. It can be easily verified that the size of a minimum vertex cover of $G$ equals the minimum size of a vertex set of $G'$ intersecting all its induced $P_3$'s. Hence, the \textsc{$(K_3 - e)$-IS-Deletion} problem cannot be solved in time $\Ocal^*(2^{o({n})})$ under the \ETH.

Suppose henceforth that $h \geq 2$, and let $H=K_{h'+2}-e$ for some $h' \geq 2$ (we relabel $h$ as $h'$ in $H$ to keep the index $h$ for the number of columns in the frame graph $F_{H,\varphi}$). We will present a reduction from the 3-\textsc{Sat} problem restricted to clean formulas. Given such a formula $\varphi$, let $F_{H,\varphi}$ be the frame graph described above the statement of the theorem,
with $h=h'$, $L = K_{h+2}-e$, and $T=\emptyset$. In this construction, when attaching copies of $L$, we choose the attachment vertices to be two distinct vertices of $L$ different from the endvertices of its unique non-edge. Note that this is always possible as $h \geq 2$. We proceed to build, starting from $F_{H,\varphi}$, an instance $G_{H,\varphi}$ of \textsc{$H$-IS-Deletion} with budget $k=5n-m$ satisfying properties~{\sf P1}, {\sf P2}, and~{\sf P3}, and then Lemma~\ref{lem:properties} will imply the claimed lower bound.

We add an edge between any two vertices $w_{i,j},w_{i',j'} \in M$ with $j \neq j'$. That is, we turn $G_{H,\varphi}[M]$ into a complete $h$-partite graph, where each part has size $s$. For each clause $C$ and each literal $\ell$ in $C$, where $\ell \in \{x,\bar{x}\}$ for some variable $x$, we add the edges $\{a_{C,x,\ell}, w_{f_{C,\ell}(j), j}\}$ and $\{b_{C,\ell}, w_{f_{C,\ell}(j), j}\}$ for every $j \in [h]$. That is, the function $f_{C,\ell}$ indicates the {\sl unique} neighbor of $a_{C,x,\ell}$ and $b_{C,\ell}$ in the $j$-th column of $M$, for every $j \in [h]$. This concludes the construction of $G_{H,\varphi}$, which clearly satisfies property~{\sf P1}. By the choice of $k$ and the fact that there is a copy of $H$ between the corresponding vertices of $A$ and $B$ (cf. Figure~\ref{fig:frame}), property~{\sf P2} holds as well. Let $X \subseteq V(G_{H,\varphi})$ be a set as in property~{\sf P3}, and let $\tilde{H}$ be an induced subgraph of $G_{H,\varphi} \setminus X$ isomorphic to $H$. Since $\omega(G_{H,\varphi}[M]) = h$, $\omega(G_{H,\varphi}[(A\cup B) \setminus X]) \leq  h$ (here we use that $h \geq 2$ and the choice of the attachment vertices of $L$),  and $\omega(H)=h+1$, $\tilde{H}$ intersects both $M$ and $A \cup B$. Moreover, since no two adjacent vertices in $(A\cup B) \setminus X$ both have neighbors in $M$, necessarily $|V(\tilde{H}) \cap (A \cup B)| =2$ and $V(\tilde{H}) \cap M$ induces a clique of size $h$, which implies that $\tilde{H}$ contains a vertex in each column of $M$. By the definition of the functions $f_{C,\ell}$ and the construction of $G_{H,\varphi}$, the two vertices in $V(\tilde{H}) \cap (A \cup B)$ must be $a_{x,C,\ell} \in A$ and $b_{C',\ell'} \in B$ with $(C, \ell) = (C', \ell')$, and therefore property~{\sf P3} follows and we are done by Lemma~\ref{lem:properties}.\end{proof}

In the following result we provide an almost tight lower bound for another  graph $H$ that is also ``close'' to a clique, in this case a clique of size $h$ plus two isolated vertices. Since this graph is somehow symmetric to the one considered in
Theorem~\ref{thm:lower-bound-no-colors}, the natural approach is to reverse the roles of neighbors and non-neighbors given by the functions $f_{C,\ell}$. However, in this way there would be many cliques of size $h$ consisting of a vertex in $A \cup B$ together with $h-1$ of its neighbors in $M$, which would create many undesired induced occurrences of $H$ with any two vertices anticomplete to such a clique. We circumvent this problem by ``reducing'' the number of columns of the central part to $h-1$, and adding a vertex $s_0$ to the set $T$ that is complete to $M$ and anticomplete to $A \cup B$. This vertex guarantees property~{\sf P3}, at the price of achieving only a near-optimal lower bound\footnote{In the conference version of this article presented at MFCS 2020, we claimed a tight bound of $\Ocal^*(2^{o(t^{h})})$ for $H=K_{h}+I_2$, but the proof contained a flaw that we have fixed in the full version.} for $H=K_{h}+I_2$, except for the case $h=1$, in which the lower bound is optimal under the \ETH. For technical reasons discussed in the proof of Theorem~\ref{thm:lower-bound-no-colors-WEAKER}, in our construction we need to assume that $h \geq 4$.

\begin{theorem}\label{thm:lower-bound-no-colors-WEAKER}
Let $h \geq 1$ be an integer. Assuming the \ETH, the {\sc $(K_{h} + I_2)$-IS-Deletion} problem cannot be solved in time $\Ocal^*(2^{o(t)})$ if $h \leq 3$, and in time $\Ocal^*(2^{o(t^{h-1})})$ if $h\geq 4$, where $t$ is the width of a given tree decomposition of the input graph.
\end{theorem}
\begin{proof}
  We first treat the cases $h \in \{1,2,3\}$ separately. As in the proof of Theorem~\ref{thm:lower-bound-no-colors}, we present polynomial-time reductions from the \textsc{Vertex Cover} problem. Again, we will prove the stronger lower bound of $\Ocal^*(2^{o(n)})$, which implies the lower bound $\Ocal^*(2^{o(t)})$ corresponding to the cases where $h \leq 3$ claimed in the statement of the theorem.

  Let first $h=1$, and note that $K_1 + I_2 = I_3$. We will show that, under the \ETH, \textsc{$K_3$-IS-Deletion} cannot be solved in time $\Ocal^*(2^{o({n})})$, which implies, by taking the complement of the input graph, that  \textsc{$I_3$-IS-Deletion} cannot be solved in time $\Ocal^*(2^{o({n})})$, concluding the proof. Given an instance $G$ of \textsc{Vertex Cover}, let $G'$ be obtained from $G$  by adding, for each edge $\{u,v\} \in E(G)$, a new vertex $w$ and two edges $\{u,w\}$ and $\{v,w\}$. Note that $|V(G')| = |V(G)| + |E(G)|$, and recall that \textsc{Vertex Cover} cannot be solved in time $2^{o(n+m)}$ under the \ETH. It can be easily verified that the size of a minimum vertex cover of $G$ equals the minimum size of a vertex set of $G'$ intersecting all its $K_3$'s. Hence, the \textsc{$K_3$-IS-Deletion} problem cannot be solved in time $\Ocal^*(2^{o({n})})$ under the \ETH.

  Let $h=2$. Given an instance $G$ of \textsc{Vertex Cover}, let $G'$ be obtained from $G$ by adding $|V(G)|$ isolated vertices. It can be easily verified that the size of a minimum vertex cover of $G$ equals the minimum size of a vertex set of $G'$ intersecting all the induced occurrences of $K_2+I_2$.

  Finally, let $h=3$. Given an instance $G$ of \textsc{Vertex Cover}, let $G'$ be obtained from $G$ by adding $|V(G)|$ isolated vertices and, for each edge $\{u,v\} \in E(G)$, a new vertex $w$ and two edges $\{u,w\}$ and $\{v,w\}$. It can be easily verified that the size of a minimum vertex cover of $G$ equals the minimum size of a vertex set of $G'$ intersecting all the induced occurrences of $K_3+I_2$.

Suppose henceforth that $h \geq 4$, and let $H=K_{h'}+I_2$ for some $h' \geq 4$ (again, we relabel $h$ as $h'$ in $H$ to keep the index $h$ for the number of columns in the frame graph $F_{H,\varphi}$). We will present a reduction from the 3-\textsc{Sat} problem restricted to clean formulas. Given such a formula $\varphi$, let $F_{H,\varphi}$ be the frame graph described above the statement of the theorem,
with $h=h'-1$, $L = K_{h'}$, and $T=\{s_0\}$ where $s_0$ is a  new vertex. 
We proceed to build, starting from $F_{H,\varphi}$, an instance $G_{H,\varphi}$ of \textsc{$H$-IS-Deletion} with budget $k=5n-m$ satisfying properties~{\sf P1}, {\sf P2}, and~{\sf P3}, and then Lemma~\ref{lem:properties} will imply the claimed lower bound.

We add an edge between any two vertices $w_{i,j},w_{i',j'} \in M$ with $j \neq j'$. That is, we turn $G_{H,\varphi}[M]$ into a complete $h$-partite graph, where each part has size $s$.
For each clause $C$ and each literal $\ell$ in $C$, where $\ell \in \{x,\bar{x}\}$ for some variable $x$, we add the edges $\{a_{C,x,\ell}, w_{i, j}\}$ and $\{b_{C,\ell}, w_{i, j}\}$ for every  $j \in [h]$  and every  $i \in [h] \setminus \{f_{C,\ell}(j)\}$. That is, in this case, the function $f_{C,\ell}$ indicates the {\sl unique} non-neighbor of $a_{C,x,\ell}$ and $b_{C,\ell}$ in the $j$-th column of $M$, for every $j \in [h]$. We make vertex $s_0$ complete to $M$ and anticomplete to $A \cup B$.
Finally, for every copy of $L$ that we have attached in $G_{H,\varphi}$, let $v_1,\ldots,v_{h'-2}$ be the vertices of $L$ distinct from the two attachment vertices, ordered arbitrarily. For $j \in [h'-2]$, we make $v_j$ complete to the $j$-th column of $M$ (note that the last column of $M$ is not used). We add these edges for two reasons. The first one is to prevent that the non-attachment vertices of $L$ may play the role of the two isolated vertices in a potential occurrence of $K_{h'} + I_2$. The second one is to prevent that the non-attachment vertices of $L$ may participate in a clique of size $h'$ in an occurrence of $K_{h'} + I_2$. Here is where the hypothesis that $h' \geq 4$ is important. Indeed, since $h' \geq 4$, each copy of $L$ contains at least two non-attachment vertices, and the fact that each such vertex is adjacent to a {\sl distinct} column of $M$ implies that, together with one of the ``surviving'' attachment vertices and some vertices of $M$, these non-attachment vertices cannot participate in a clique of size $h'$ (for this, we also use that each column of $M$ induces an independent set).
This concludes the construction of $G_{H,\varphi}$, which clearly satisfies property~{\sf P1}, since the vertices in the variable and clauses gadgets have neighbors only in $M$ and within those gadgets.

Let us now argue that $G_{H,\varphi}$ satisfies property~{\sf P2}. Assume for contradiction that there exists a hitting set $X$ of size at most $k$ and a variable $x$ such that none of the pairs $(a_{x,C_1,\ell},a_{x,C_3,\ell})$
and $(a_x,a_{x,C_2,\bar{\ell}})$ is entirely in $X$. The choice of $k$ and the construction of the frame graph $F_{H,\varphi}$ imply that, in that case, there exists an entire copy of $L=K_{h'}$ in
$G_{H,\varphi} \setminus X$. Since the vertices in the variable or clause gadget where this copy of $L$ lies do not have neighbors in other variable or clause gadgets, in order for a copy of $H$ not to occur in $G_{H,\varphi} \setminus X$, necessarily $X$ must contain {\sl all} vertices in {\sl all} variable and clause gadgets except possibly two of them (the one containing $L$, and another one that may allow for the occurrence of $K_{h'} + I_1$), which clearly exceeds the budget $k$. An analogous argument applies if we assume that $|X \cap \{b_{C,\ell} \mid \ell \in C\}| \leq |C|-2$ for some clause $C$. Thus, $G_{H,\varphi}$ satisfies property~{\sf P2}.

Finally, let $X$ be a set as in property~{\sf P3}, let $\tilde{H}$ be an induced occurrence of $H$ in $G_{H,\varphi} \setminus X$, and let $K$ be the subgraph of $\tilde{H}$ isomorphic to $K_h'$. The choice of $X$ and the discussion above about the edges between $M$ and the copies of $L$ imply that $K$ cannot contain a non-attachment vertex of a copy of $L$. However, $K$ may contain a variable or clause vertex $v \in A \cup B$ together with $h$ of its neighbors in $M$, one in each column. Note that, by construction of $G_{H,\varphi}$, $| V(K) \cap (A \cup B) | \leq 1$. In order to complete $K$ into $K_{h'} + I_2$, $\tilde{H}$ should contain two non-adjacent vertices in $A \cup B$ that are anticomplete to $K$. The construction of $G_{H,\varphi}$ forces that these two vertices must be  $a_{x,C,\ell} \in A$ and $b_{C',\ell'} \in B$ with $(C, \ell) = (C', \ell')$. We distinguish two cases. If $|V(K) \cap (A \cup B) |= 0$, property~{\sf P3} follows and we are done by Lemma~\ref{lem:properties}. Otherwise, $V(K) \cap (A \cup B) = \{v\}$ for some vertex $v$. Note that, since $v$ is not adjacent to $s_0$ and $s_0$ is complete to $M$, $s_0 \notin V(\tilde{H})$. We construct from $\tilde{H}$ another induced occurrence $\tilde{H}'$ of $H$  in $G_{H,\varphi} \setminus X$ by defining $V(\tilde{H}') := V(\tilde{H}) \setminus \{v\} \cup \{s_0\} $. The subgraph $\tilde{H}'$ satisfies the conditions of property~{\sf P3}, and the theorem follows.
\end{proof}

Note that, in the proof of Theorem~\ref{thm:lower-bound-no-colors} for $H=K_{h+2}-e$, all the occurrences of $H$ in $G_{H,\varphi}$ are induced, and therefore the lower bound also applies to the \textsc{$(K_{h+2} - e)$-S-Deletion} problem. On the other hand, for $H=K_{h} + I_2$ the proof of Theorem~\ref{thm:lower-bound-no-colors-WEAKER} strongly uses the fact that $H$ cannot occur as an {\sl induced} subgraph. The following lemma explains why the proof does not work for the subgraph version: it can be easily solved in single-exponential time. This points out an interesting difference between both problems.

\begin{lemma}\label{thm:difference-non-induced}
 For every two integers $h \geq 1$ and $\ell \geq 0$,  the  {\sc $(K_h + I_{\ell})$-S-Deletion} problem can be solved in time $\Ocal^*(2^{\Ocal(t)})$, where $t$ is the width of a given tree decomposition of the input graph.
\end{lemma}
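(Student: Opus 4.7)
The key observation is that, since we are dealing with the (not necessarily induced) subgraph version, the independent set part $I_\ell$ of $H = K_h + I_\ell$ imposes essentially no constraint beyond requiring enough additional vertices. More precisely, an injective homomorphism $\pi : V(K_h + I_\ell) \to V(G)$ exists if and only if $G$ contains a $K_h$ as a subgraph and $|V(G)| \geq h + \ell$: indeed, the $h$ vertices mapped from $K_h$ must induce (at least) a clique in $G$, whereas the $\ell$ vertices mapped from $I_\ell$ only need to be distinct from one another and from the first $h$, with no constraint on their adjacencies because $I_\ell$ has no edges.

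Using this observation, for any set $X \subseteq V(G)$, the graph $G \setminus X$ contains an occurrence of $K_h + I_\ell$ as a subgraph if and only if $G \setminus X$ contains a $K_h$-subgraph and $|V(G) \setminus X| \geq h + \ell$. Therefore $X$ hits all $(K_h + I_\ell)$-subgraphs of $G$ if and only if one of the following two conditions holds:
\begin{itemize}
\item $G \setminus X$ is $K_h$-subgraph-free, or
\item $|X| \geq n - (h + \ell) + 1$.
\end{itemize}
Hence the optimum is the minimum between the solution of \textsc{$K_h$-S-Deletion} on $G$ and $\max\{0, n - h - \ell + 1\}$.

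The plan is therefore as follows. First, compute in constant time the trivial quantity $\max\{0, n - h - \ell + 1\}$, which corresponds to the cost of removing all but $h + \ell - 1$ vertices. Second, since for a clique the subgraph and induced subgraph problems coincide, invoke the algorithm of Theorem~\ref{Kh_single} to solve \textsc{$K_h$-S-Deletion} in time $2^{\Ocal(t)} \cdot n$. Return the minimum of the two values. Both subroutines run within the claimed bound, so the overall running time is $\Ocal^*(2^{\Ocal(t)})$. There is no real obstacle here; the point of the lemma is simply to highlight that, for the subgraph variant, the $I_\ell$ component contributes only a trivial threshold term, which is precisely what fails in the induced setting used in the lower bound for $K_h + I_2$ above.
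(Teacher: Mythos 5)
Your proof is correct and follows essentially the same route as the paper: both reduce to solving \textsc{$K_h$-S-Deletion} via Theorem~\ref{Kh_single} and observe that the $I_\ell$ part only contributes a trivial vertex-count threshold of $\max\{0, n-h-\ell+1\}$, returning the minimum of the two. If anything, your characterization of when a set $X$ hits all $(K_h+I_\ell)$-subgraphs is stated a little more explicitly than the paper's closing one-line correctness remark.
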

\begin{proof}
We will use the fact that, as observed in~\cite{CyganMPP17},  \textsc{$K_h$-S-Deletion} can be solved in time $\Ocal^*(2^{\Ocal(t)})$ for every $h \geq 1$. This proves the result for $\ell = 0$, so let now $\ell \geq 1$.
Without loss of generality, assume that $n\geq h+\ell$, as otherwise the solution is zero.
Given an $n$-vertex input graph $G$ together with a tree decomposition of width $t$, we first solve \textsc{$K_h$-S-Deletion} on $G$ in time  $\Ocal^*(2^{\Ocal(t)})$. Let $X$ be a smallest $K_h$-hitting set in $G$ and let $|X|=k$.
Notice that if $n\geq h+\ell$, whenever $G$ contains a clique of size $h$ then it contains $K_h + I_{\ell}$ as a subgraph as well.
Thus, if $k \geq n - (h+ \ell) + 1$, then $|V(G) \setminus X| < h + \ell$, so we can safely output $n - (h+\ell) + 1$ as the size of a smallest $(K_h + I_{\ell})$-hitting set in $G$. Otherwise, if $k \leq n - (h+ \ell)$, we output $k$. The algorithm is correct by the fact that a $(K_h + I_{\ell})$-hitting set is not smaller than a $K_h$-hitting set.
\end{proof}

By Theorem~\ref{thm:generic-algo-no-colors}, the lower bound presented in Theorem~\ref{thm:lower-bound-no-colors} for $H = K_{h+2}-e$ is tight under the \ETH, and the one presented in Theorem~\ref{thm:lower-bound-no-colors-WEAKER} for $H =K_{h} + I_2$ is almost tight. These two graphs are very symmetric, in the sense that each of them contains two non-adjacent vertices that are either complete or anticomplete to a ``central'' clique $K_h$ (cf. Figure~\ref{fig:graphs-hard}). Unfortunately, for graphs without two such non-adjacent symmetric vertices, our framework described above is not capable of obtaining (almost) tight lower bounds. For instance, let $H=K_{h+1} + I_1$, that is, a clique of size $h+1$ plus an isolated vertex. Let $a \in V(H)$ be any vertex in $K_{h+1}$ and let $b$ be the isolated vertex. The natural idea in order to obtain a tight lower bound of $\Ocal^*(2^{o(t^{h})})$ would be, starting from the frame graph $F_{H,\varphi}$ described above, to make the vertices $a_{x,C,\ell} \in A$ play the role of $a$, and vertices $b_{C,\ell} \in B$ that of $b$. Then, the functions $f_{C,\ell}$ would indicate, for each vertex $a_{x,C,\ell}$ (resp. $b_{C,\ell}$), its unique neighbor (resp. unique non-neighbor) in each of the columns of $M$. However, this idea does not work for the following reason: many undesired copies of $H$ appear by interchanging the expected  roles of vertices $a_{x,C,\ell}$ and $b_{C,\ell}$, and selecting, in each column of $M$, any of the $s-1$ non-neighbors of $a$ and any of the $s-1$ neighbors of $b$.  We overcome this problem by ``pledging'' one column of $M$ and introducing a \emph{sentinel} vertex $s_0 \in T$ that is complete to $M$ and the vertices $a_{x,C,\ell}$, and anticomplete to the vertices $b_{C,\ell}$. This vertex ``fixes'' the roles of vertices in $A$ and $B$ at the price of losing one column of $M$, hence getting (by Lemma~\ref{lem:properties}) a weaker lower bound of $\Ocal^*(2^{o(t^{h-1})})$, as in Theorem~\ref{thm:lower-bound-no-colors-WEAKER}. In the following theorem we formalize this idea and we extend it to a more general graph $H$, namely $K_{h+1} + v_x$ for $0 \leq x \leq h-1$, defined as the graph obtained from $K_{h+1}$ by adding a vertex $v$ adjacent to $x$ vertices in the clique (cf. Figure~\ref{fig:graphs-hard}). We will need an extra sentinel vertex in $T$ for each of the neighbors of $v$ in the clique, losing one column of the central part $M$ for each of them.


\begin{figure}[ht]
\begin{center}
\includegraphics[scale=.95]{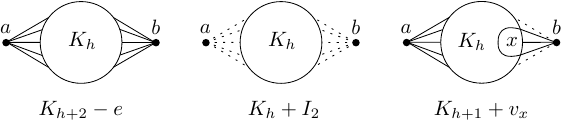}
\end{center}
\caption{Graphs $H$ considered in Theorem~\ref{thm:lower-bound-no-colors}, Theorem~\ref{thm:lower-bound-no-colors-WEAKER}, and Theorem~\ref{thm:lower-bound-no-colors-2}, respectively.}
\label{fig:graphs-hard}
\end{figure}

\begin{theorem}\label{thm:lower-bound-no-colors-2}
Let $h \geq 1$ and $0 \leq x \leq h-1$ be  integers and let $K_{h+1} + v_x$ be the graph obtained from $K_{h+1}$ by adding a vertex adjacent to $x$ vertices in the clique. Then, unless the \ETH fails, the {\sc $(K_{h+1} + v_x)$-IS-Deletion} problem cannot be solved in time $\Ocal^*(2^{o(t^{h-x-1})})$, where $t$ is the width of a given tree decomposition of the input graph.
\end{theorem}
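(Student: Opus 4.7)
The plan is to instantiate the general frame of Lemma~\ref{lem:properties}, this time with $h-x-1$ columns in the central part $M$ (so that $s = \Ocal(n^{1/(h-x-1)})$) and with an enhanced central part $M' = M \cup T$ where $T = \{s_1,\dots,s_{x+1}\}$ is a set of $x+1$ \emph{sentinel} vertices. Label the vertices of $H = K_{h+1} + v_x$ as $u_1,\dots,u_{h+1}$ for the clique and $v$ for the additional vertex, with $N_H(v) = \{u_1,\dots,u_x\}$. In the canonical induced copies of $H$ we target, each $a_{x,C,\ell} \in A$ will play the role of the clique vertex $u_{h+1}$, each $b_{C,\ell} \in B$ will play the role of $v$, the sentinels $s_1,\dots,s_x$ will play the $x$ neighbors $u_1,\dots,u_x$ of $v$, the role-fixing sentinel $s_{x+1}$ will play the non-neighbor $u_{x+1}$, and a column-transversal of $M$ will play $u_{x+2},\dots,u_h$.

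Concretely, I will make $G_{H,\varphi}[M]$ complete $(h-x-1)$-partite (columns as parts), turn $T$ into a clique, and make every sentinel complete to $M$; this ensures that $\omega(G_{H,\varphi}[M']) = h$. Each $a_{x,C,\ell}$ will be made complete to $T$ and adjacent to $w_{f_{C,\ell}(j),j}$ for every $j \in [h-x-1]$. Each $b_{C,\ell}$ will be made complete to $\{s_1,\dots,s_x\}$, non-adjacent to $s_{x+1}$, and adjacent to $w_{i,j}$ if and only if $i \neq f_{C,\ell}(j)$, so that the transversal selected by $f_{C,\ell}$ is precisely the non-neighborhood of $b_{C,\ell}$ in $M$. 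For the gadget graph I plan to take $L = H$ (picking a non-adjacent pair of $L$-vertices as attachment points), and make every non-attachment vertex of every attached copy of $L$ anticomplete to $M'$ and to all other gadgets, so that these interior vertices cannot participate in copies of $H$ that leave their own copy of $L$.

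With this construction, Property~\textsf{P1} is immediate since each connected component of $G_{H,\varphi} \setminus M'$ lives inside a single variable or clause gadget and has size bounded by a function of $H$. For Property~\textsf{P2}, each attached $L = H$ between the corresponding attachment pair survives as an induced copy of $H$ unless at least one attachment vertex lies in the solution, which forces the prescribed pattern at the tight budget $k = 5n - m$ exactly as in Theorem~\ref{thm:lower-bound-no-colors}. The verification of Property~\textsf{P3} is where the design pays off: given a canonical $X$, the maximum clique of $G_{H,\varphi}[M']$ is $h$, so any induced $K_{h+1}$ of $G_{H,\varphi} \setminus X$ must use a vertex outside $M'$; the complete-to-$T$ condition on the $a$-vertices, combined with $b$-vertices missing $s_{x+1}$, breaks the $A/B$ symmetry and forces the unique clique vertex to be some $a_{x,C,\ell}$, together with all of $T$ and its $(C,\ell)$-transversal of $M$. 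Then the non-adjacency of $b_{C',\ell'}$ to that transversal together with the choice of $f_{C,\ell}$ forces $(C,\ell) = (C',\ell')$, exactly as in the $K_{h+2}-e$ case.

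The main obstacle will be ruling out spurious induced copies of $H$ that do not fit this pattern. Two sources of trouble stand out: first, the non-attachment interior vertices of the attached copies of $L = H$, which can participate in local copies of $H$ inside a broken gadget (this is particularly delicate for $x=0$, where $H$ is disconnected and an isolated $v$ may be placed far from its $K_{h+1}$ partner); second, the possibility of using two vertices from $A \cup B$ of the same gadget (which are adjacent through $L$) to form a clique that extends into $M'$. I plan to handle the first issue by making the interior $L$-vertices fully anticomplete to the rest of the graph, so that any induced copy of $H$ meeting such a vertex must live entirely inside a single copy of $L$, which is impossible once an attachment vertex has been deleted; and to handle the second by carefully choosing the two attachment points of $L$ so that edges within a gadget never create a clique of size larger than a prescribed threshold compatible with $\omega(H) = h+1$. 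Once \textsf{P1}--\textsf{P3} are verified, Lemma~\ref{lem:properties} applied with the chosen $h - x - 1$ columns yields the stated lower bound $\Ocal^*(2^{o(t^{h-x-1})})$ under the \ETH.
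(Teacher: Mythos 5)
Your construction for $x \geq 1$ is essentially identical to the paper's (same $h-x-1$ columns, same $x+1$ sentinels with one role-fixing sentinel that $A$ sees and $B$ does not, same $f_{C,\ell}$-encoding where $a$-vertices are adjacent to the transversal and $b$-vertices to its complement, $L = H$), and the verification of \textsf{P1}--\textsf{P3} goes through just as you outline. The difference — and the gap — is in the case $x=0$, where $H = K_{h+1} + I_1$ is disconnected.

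You correctly identify the danger: ``an isolated $v$ may be placed far from its $K_{h+1}$ partner.'' But your proposed remedy, making every non-attachment vertex of every $L$-copy anticomplete to $M'$ and to all other gadgets ``so that any induced copy of $H$ meeting such a vertex must live entirely inside a single copy of $L$'', reasons as if $H$ were connected. When $H$ is disconnected, anticompleteness is exactly what \emph{enables} the spurious copy rather than confining it: an interior $L$-vertex $v'$ that is anticomplete to $M'$ (and to all gadgets other than its own) has, in particular, no neighbor in the clique $K$ formed by the sentinel $s_0$, a transversal of $M$, and some surviving $a$-vertex; so $K \cup \{v'\}$ is an induced $K_{h+1} + I_1$ that does not fit the pattern demanded by \textsf{P3}. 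In other words, for the isolated component you need to \emph{deny} anticompleteness, not grant it. The paper does precisely the opposite of your fix: for $x=0$ it sets $L = K_{h+1}$ (not $L=H$) and then makes each interior $L$-vertex of the $B$-side gadgets complete to one entire column of $M$ (while the $A$-side interiors are automatically adjacent to $s_0$, which is complete to $A$). This guarantees that every interior $L$-vertex has a neighbor in \emph{every} transversal-clique $K$, so none of them can play the role of the isolated vertex, and \textsf{P3} survives. Without this edge-addition (or some equivalent device), your $x=0$ instance admits many spurious copies of $H$ that survive the canonical $X$, so \textsf{P3} fails and the reduction breaks. A secondary but related issue is the choice $L = H$ itself when $x=0$: the only non-adjacent pairs in $K_{h+1} + I_1$ involve the isolated vertex, so one of your attachment points is forced to be the isolated vertex of $L$; deleting that attachment vertex leaves the entire $K_{h+1}$ of $L$ intact, which can then itself pair with any far-away vertex to form $H$ — another reason the paper switches to $L = K_{h+1}$ in this case.
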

\begin{proof}
Let $H = K_{h'+1} + v_x$ for $h' \geq 1$ and $0 \leq x \leq h'-1$ (again, we relabel $h$ as $h'$ in $H$ to keep the index $h$ for the number of columns in the frame graph $F_{H,\varphi}$).
We will again  present a reduction from the 3-\textsc{Sat} problem restricted to clean formulas. Given such a formula $\varphi$, let $F_{H,\varphi}$ be the frame graph described above with $h = h' - x - 1$, $T = \{s_0,s_1, \ldots, s_x\}$, and  $L = K_{h'+1}$ if $x=0$ and $L = H$ if $x \geq 1$.
We proceed to build, starting from $F_{H,\varphi}$, an instance $G_{H,\varphi}$ of \textsc{$H$-IS-Deletion} with budget $k=5n-m$ satisfying properties~{\sf P1}, {\sf P2}, and~{\sf P3}.

We first add an edge between any two vertices $w_{i,j},w_{i',j'} \in M$ with $j \neq j'$. That is, we turn $G_{H,\varphi}[M]$ into a complete $h$-partite graph, where each part has size $s$. We make $s_0 \in T$  complete to $A \cup M$ and anticomplete to $B$, and all vertices in $T \setminus \{s_0\}$ complete to $A \cup B \cup M$. We also turn $G_{H,\varphi}[T]$ into a clique.  For each clause $C$ and each literal $\ell$ in $C$, where $\ell \in \{x,\bar{x}\}$ for some variable $x$, we add the edges $\{a_{C,x,\ell}, w_{f_{C,\ell}(j), j}\}$ and $\{b_{C,\ell}, w_{i, j}\}$ for every $j \in [h]$ and every $i \in [h] \setminus \{f_{C,\ell}(j)\}$. That is, the functions $f_{C,\ell}$ indicate the neighbors of the vertices in $A$ and the non-neighbors of the vertices in $B$.

Moreover, in the case $x=0$ (that is, when $H=K_{h'+1} + I_1$, which is disconnected), for each copy of $L$ in $G_{H,\varphi}[B]$, we do the following. Let $v_1,\ldots,v_{h'-1}$ be the vertices of $L$ distinct from the two attachment vertices, ordered arbitrarily. For $j \in [h'-1]$, we make $v_j$ complete to the $j$-th column of $M$. We add these edges to prevent that the non-attachment vertices of $L$ may play the role of the isolated vertex in a potential occurrence of $K_{h'+1} + I_1$.

This concludes the construction of $G_{H,\varphi}$, which can be easily seen to satisfy properties~{\sf P1} and~{\sf P2} using arguments analogous to those used in the proof of Theorem~\ref{thm:lower-bound-no-colors}. Let $X \subseteq V(G_{H,\varphi})$ be a set as in property~{\sf P3}, and let $\tilde{H}$ be an induced subgraph of $G_{H,\varphi} \setminus X$ isomorphic to $H$.

Suppose first that $x \geq 1$. Since $\omega(G_{H,\varphi}[M \cup T]) = h'$, $\omega(G_{H,\varphi}[(A\cup B) \setminus X]) \leq  h'$,  and $\omega(H)=h'+1$, $\tilde{H}$ intersects both $M'= M \cup T$ and $A \cup B$. Moreover, since no two adjacent vertices in $(A\cup B) \setminus X$ both have neighbors in $M$, necessarily $|V(\tilde{H}) \cap (A \cup B)| =2$ and $V(\tilde{H}) \cap M'$ induces a clique of size $h'$, which implies that $\tilde{H}$ contains a vertex in each column of $M'$ and the whole set $T$. By the definition of the functions $f_{C,\ell}$ and the construction of $G_{H,\varphi}$, the two vertices in $V(\tilde{H}) \cap (A \cup B)$ must be $a_{x,C,\ell} \in A$ and $b_{C',\ell'} \in B$ with $(C, \ell) = (C', \ell')$, and therefore property~{\sf P3} follows and we are done by Lemma~\ref{lem:properties}.

Finally, when $x=0$, the same arguments yield that $V(\tilde{H}) \cap M'$ induces a clique $K$ of size $h'$. Note that $K$ must contain one vertex from each column of $M$ and the whole set $T$. By construction of $G_{H,\varphi}$, the only vertices in $A \cup B$ that can be complete to $K$ (resp. complete to $T \setminus \{s_0\}$ and anticomplete to $K \setminus (T \setminus \{s_0\})$) are the vertices $a_{x,C,\ell}$ (resp. $b_{C,\ell}$). Hence, the two vertices in $V(\tilde{H}) \cap (A \cup B)$ must be $a_{x,C,\ell} \in A$ and $b_{C',\ell'} \in B$ with $(C, \ell) = (C', \ell')$, and therefore property~{\sf P3} also follows and we are done again by Lemma~\ref{lem:properties}.
\end{proof}

It is worth mentioning that the lower bound given in Theorem~\ref{thm:lower-bound-no-colors-2} can be strengthened to $\Ocal^*(2^{o(t^{h-\tilde{x}-1})})$, where $\tilde{x}= \min\{x, h-1-x\}$, by using the following trick. If the vertex not belonging to the clique $K_{h+1}$ (vertex $b$ in Figure~\ref{fig:graphs-hard}) has more than $\frac{h-1}{2}$ neighbors in the clique (i.e., if $\tilde{x}= h-1-x$), we can interchange, for vertices $b_{C,\ell} \in B$, the roles of neighbors/non-neighbors of the set $T \setminus \{s_0\}$ and the vertices in $M$ given by the functions $f_{C,\ell}$. Doing this modification, an analogous proof works and we can keep the number of columns of $M$ to be $h-\tilde{x}-1$, which is always at least $\frac{h-3}{2}$.  We omit the details.

Another direction for transferring the lower bounds of Theorem~\ref{thm:lower-bound-no-colors} and Theorem~\ref{thm:lower-bound-no-colors-2} to other graphs $H$ is to consider complete bipartite graphs.

\begin{theorem}\label{thm:lower-bound-complete-bipartite}
For any integer $h \geq 2$, the {\sc $K_{h,h}$-IS-Deletion} problem cannot be solved in time $\Ocal^*(2^{o(t^{h})})$ unless  the \ETH fails, where $t$ is the width of a given tree decomposition of the input graph.
\end{theorem}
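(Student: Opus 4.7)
The plan is to invoke Lemma~\ref{lem:properties} with the central part $M$ having exactly $h$ columns, so that the resulting conditional lower bound is the desired $2^{o(t^h)}$.  Starting from the frame graph $F_{K_{h,h},\varphi}$, I will build an instance $G_{K_{h,h},\varphi}$ of \textsc{$K_{h,h}$-IS-Deletion} by choosing $L=K_{h,h}$ and identifying the two attachment endpoints with two vertices on the same side of $L$'s bipartition (so that attachment pairs remain non-adjacent in $L$); introducing $h-2$ sentinel vertices $T=\{s_1,\dots,s_{h-2}\}$ forming an independent set, each of them complete to $M$ and anticomplete to everything else; keeping $M$ as an independent set of size $hs$; and adding, for every triple $(x,C,\ell)$, the edges $\{a_{x,C,\ell},w_{f_{C,\ell}(j),j}\}$ and $\{b_{C,\ell},w_{f_{C,\ell}(j),j}\}$ for each $j\in[h]$.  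The non-attachment vertices of every copy of $L$ will be left anticomplete to $M'$ and to the rest of the graph outside their own copy of $L$, so each variable or clause gadget together with its attached copies of $L$ forms a connected component of $G_{K_{h,h},\varphi}\setminus M'$ of size depending only on $h$, yielding Property~{\sf P1}.

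For Property~{\sf P2}, I will mimic the budget argument of Theorem~\ref{thm:lower-bound-no-colors}: each attached copy of $L$ is itself an induced $K_{h,h}$, hence must be hit by $X$.  Since the four $L$-copies of every variable gadget form a $4$-cycle on its attachment points, while the $L$-copies of every clause gadget cover each pair of vertices in $\{b_{C,\ell}\mid\ell\in C\}$, hitting all of them within the tight budget $k=5n-m$ will force exactly one of the pairs $(a_{x,C_1,\ell},a_{x,C_3,\ell})$ or $(a_x,a_{x,C_2,\bar\ell})$ to lie in $X$ for every variable $x$, and exactly $|C|-1$ elements of $\{b_{C,\ell}\mid\ell\in C\}$ to lie in $X$ for every clause $C$.

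The crux is Property~{\sf P3}, and the design above is tailored so that the relevant region of $G_{K_{h,h},\varphi}$ becomes bipartite: the subgraph induced by $M\cup T\cup A\cup B$ admits the bipartition $(M,\,T\cup A\cup B)$, because $M$ and $T$ are independent, the attachment points of each $L$-copy are non-adjacent in $L$ (so $A\cup B$ is also independent), and $T$ is anticomplete to $A\cup B$.  Hence any induced copy $\tilde H$ of $K_{h,h}$ contained in $M\cup T\cup A\cup B$ must respect this bipartition, with $h$ vertices on the $M$-side and $h$ vertices on the $(T\cup A\cup B)$-side; since $|T|=h-2$, the second side contains at least two vertices of $A\cup B$, and each such vertex has exactly the $h$ neighbors $\{w_{f_{C,\ell}(j),j}:j\in[h]\}$ in $M$.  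The pairwise distinctness of the functions $f_{C,\ell}$ then forces any two non-sentinel vertices on the second side to share the same $(C,\ell)$, and since two distinct vertices of $A$ (or of $B$) with the same $(C,\ell)$ would coincide, the only possibility is one $a_{x,C,\ell}\in A$ and one $b_{C,\ell}\in B$ with matching $(C,\ell)$, exactly as~{\sf P3} demands.  Copies of $K_{h,h}$ involving non-attachment vertices of some $L$-copy are ruled out because such vertices have no edges outside their own copy and~{\sf P2} leaves at most $2h-1$ vertices in each copy, which is too few for an induced $K_{h,h}$.  The main obstacle is precisely taming the interaction between the gadgets $L=K_{h,h}$ and the central part; this is handled by the sentinel set $T$ of the right size $h-2$ and the above anticompleteness conditions, after which Lemma~\ref{lem:properties} with $h$ columns delivers the stated $2^{\Omega(t^h)}$ lower bound.
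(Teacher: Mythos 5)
Your construction is genuinely different from the paper's. The paper achieves the non-$M$ side of a $K_{h,h}$ by duplicating the set $A$ a total of $h-1$ times, making this side equal to $\{a^{\beta}_{x,C,\ell}\mid\beta\in[h-1]\}\cup\{b_{C,\ell}\}$; this forces a modified budget $k=(2h+1)n-m$ and a restated property ${\sf P3}'$. You instead enlarge the enhanced central part by a constant-size sentinel set $T$ of $h-2$ vertices complete to $M$, so that the non-$M$ side becomes $T\cup\{a_{x,C,\ell},b_{C,\ell}\}$, and the budget $k=5n-m$ and the generic property {\sf P3} carry over unchanged. Since $|T|$ is a constant, {\sf P1} is unaffected and Lemma~\ref{lem:properties} applies directly with $h$ columns. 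The bipartiteness observation on $G[M\cup T\cup A\cup B]$ with parts $(M,\,T\cup A\cup B)$ is clean and correct: it forces one side of any induced $K_{h,h}$ to lie in $M$, and since $|T|=h-2$ the other side must contain at least two vertices of $A\cup B$; the pairwise distinctness of the $f_{C,\ell}$'s and the fact that $a_x$ has no $M$-neighbours then pin those down to $\{a_{x,C,\ell},b_{C,\ell}\}$ with matching $(C,\ell)$. This yields the same lower bound with arguably less bookkeeping than the paper's duplication argument.

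One step is stated too loosely. You dismiss copies of $K_{h,h}$ containing a non-attachment vertex $q$ of some copy $L_0$ by saying that $L_0\setminus X$ has at most $2h-1$ vertices, ``too few for an induced $K_{h,h}$.'' That argument rules out copies of $K_{h,h}$ lying \emph{entirely} inside $L_0$, but not copies that straddle $L_0$ and the outside: a priori, the side of $\tilde H$ containing $q$ could also contain vertices from $M$ that happen to be common neighbours of the attachment points. The correct reason is slightly different: since $q$ has all its neighbours inside $L_0$ and its $L_0$-side (call it $P_{L_0}$) has exactly $h$ vertices, the opposite side of $\tilde H$ is forced to equal $P_{L_0}$; but $P_{L_0}$ contains the two attachment points, one of which is in $X$ for any $X$ of the form prescribed by~{\sf P3} (each $L$-copy has exactly one endpoint in $X$ under the tight budget). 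Hence $P_{L_0}\not\subseteq V(G_{H,\varphi}\setminus X)$, and such a $\tilde H$ cannot occur. With that substitution the proof goes through, and the remainder (the {\sf P2} vertex-cover count on the $4$-cycles and $K_{|C|}$'s, the converse direction that each surviving pair gives rise to the complete bipartite subgraph $T\cup\{a_{x,C,\ell},b_{C,\ell}\}\cup\{w_{f_{C,\ell}(j),j}\mid j\in[h]\}$) is correct.
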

\begin{proof}
Let $H = K_{h',h'}$ for $h' \geq 2$. Given a clean 3-\textsc{Sat} formula $\varphi$, let $F_{H,\varphi}$ be the frame graph described above with $h = h'$, $T =\emptyset$, and  $L = K_{h,h}$. In this reduction we will slightly change the budget and property~{\sf P3} of the constructed instance $G_{H,\varphi}$ of \textsc{$H$-IS-Deletion}.

Starting from $F_{H,\varphi}$,
for each clause $C$ and each literal $\ell$ in $C$, where $\ell \in \{x,\bar{x}\}$ for some variable $x$, we add the edges $\{a_{C,x,\ell}, w_{f_{C,\ell}(j), j}\}$ and $\{b_{C,\ell}, w_{f_{C,\ell}(j), j}\}$ for every $j \in [h]$. We duplicate $h-2$ times the subgraph $G_{H,\varphi}[A]$, obtaining $h-1$ copies overall, and each copy has the same neighborhood in $G_{H,\varphi}[M \cup B]$ as the original one. We relabel the vertices $a_{x,C,\ell}$ in each copy as $a^{\beta}_{x,C,\ell}$ for $\beta \in [h-1]$, and we call again $A$ the set $V(G_{H,\varphi}) \setminus (M \cup B)$.  This concludes the construction of $G_{H,\varphi}$, which clearly satisfies properties~{\sf P1} and~{\sf P2}, where the latter one applies to all the pairs $(a^{\beta}_{x,C_1,\ell}, a^{\beta}_{x,C_3,\ell})$ and $(a^{\beta}_x,a^{\beta}_{x,C_2,\bar{\ell}})$ for $\beta \in [h-1]$. Hence, we update the budget accordingly to $k:= 5n-m+2(h-2)n = (2h+1)n - m$.

We proceed to prove the following modified version of property~{\sf P3} adapted to the current construction:

 \begin{itemize}
\item[]${\sf P3}'$: Let $X \subseteq V(G_{H,\varphi})$ contain exactly one of $(a^{\beta}_{x,C_1,\ell}, a^{\beta}_{x,C_3,\ell})$ and $(a^{\beta}_x,a^{\beta}_{x,C_2,\bar{\ell}})$ for each\\
     \textcolor{white}{{\sf P3'}: }variable $x$ and $\beta \in [h-1]$, and $|C|-1$ vertices in $\{b_{C,\ell} \mid \ell \in C\}$ for each clause $C$.\\
      \textcolor{white}{{\sf P3'}: }Then all occurrences of $K_{h,h}$ in $G_{H,\varphi} \setminus X$ as an induced subgraph contain a set\\
       \textcolor{white}{{\sf P3'}: }$\{ a^{\beta}_{x,C,\ell}  \mid \beta \in [h-1]\} \subseteq A$, and exactly one vertex $b_{C',\ell'} \in B$, with $(C, \ell) = (C', \ell')$.\\
       \textcolor{white}{{\sf P3'}: }Moreover, each such a vertex set gives rise to an occurrence of $H$ in $G_{H,\varphi} \setminus X$.
\end{itemize}

\noindent Basically, property~${\sf P3}'$ states that all the copies of the former set $A$ behave in a similar way. With this in mind, it is easy to see that  Lemma~\ref{lem:properties} still holds if we replace, for an instance $G_{H,\varphi}$ constructed in this reduction,  property~${\sf P3}$ by property~${\sf P3}'$.

Consider a set $X \subseteq V(G_{H,\varphi})$ as in property~${\sf P3}'$, and let $\tilde{H}$ be an induced subgraph of $G_{H,\varphi} \setminus X$ isomorphic to $K_{h,h}$. By  construction of $G_{H,\varphi}$, one of the two parts of $\tilde{H}$ must lie entirely inside $M$, as there are no edges among distinct variable or clause gadgets. Since the other part of $\tilde{H}$ must lie entirely inside $A \cup B$, the choice of the functions $f_{C,\ell}$ implies that the only vertex sets of size $h$ in $A \cup B$ that are complete to a set of $h$ non-adjacent vertices are of the form $\{ a^{\beta}_{x,C,\ell}  \mid \beta \in [h-1]\} \cup \{b_{C,\ell}\}$ for some clause $C$ and a literal $\ell \in \{x, \bar{x}\}$, hence property~${\sf P3}'$ holds and the theorem follows.
\end{proof}

Note that the proof of Theorem~\ref{thm:lower-bound-complete-bipartite} works for both \textsc{$K_{h,h}$-IS-Deletion} and \textsc{$K_{h,h}$-S-Deletion}, since all the occurrences of $K_{h,h}$ in the constructed graph $G_{H,\varphi}$ are induced.  Hence, as the particular case of Theorem~\ref{thm:lower-bound-complete-bipartite} for $h=2$ we get the following corollary, which answers a question of Mi. Pilipczuk~\cite{Pilipczuk11} about the asymptotic complexity of \textsc{$C_4$-S-Deletion} parameterized by treewidth.

\begin{corollary}\label{cor:C4}
Neither  {\sc $C_4$-IS-Deletion} nor {\sc $C_4$-S-Deletion} can be solved in time $\Ocal^*(2^{o(t^{2})})$ unless  the \ETH fails, where $t$ is the width of a given tree decomposition of the input graph.
\end{corollary}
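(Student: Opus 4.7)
The plan is very short: the corollary is essentially a direct instantiation of Theorem~\ref{thm:lower-bound-complete-bipartite} for $h=2$, combined with the observation that $C_4$ is the same graph as $K_{2,2}$. So the first step is just to note this identification, which immediately yields the lower bound $\Ocal^*(2^{o(t^{2})})$ for \textsc{$C_4$-IS-Deletion} under \ETH.

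The only real content beyond this identification is extending the bound from the induced version to the subgraph version. For this, I would revisit the construction of $G_{H,\varphi}$ given in the proof of Theorem~\ref{thm:lower-bound-complete-bipartite} and verify that, in the resulting instance with $h=2$, every occurrence of $K_{2,2}$ as a subgraph is in fact an induced occurrence; this follows from the structure of the frame graph together with property~${\sf P3}'$, since one side of any $K_{2,2}$ copy lies entirely in $M$ (which is a complete $h$-partite graph and hence contains no edge within a column) and the other side lies in $A \cup B$ and consists of one vertex of the form $b_{C,\ell}$ together with its $h-1$ ``clones'' $a^{\beta}_{x,C,\ell}$, which are pairwise non-adjacent by construction. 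Thus every $K_{2,2}$ subgraph of $G_{H,\varphi}$ is induced, and the equivalence between the constructed instance and the satisfiability of $\varphi$ proved via Lemma~\ref{lem:properties} transfers verbatim to the non-induced subgraph version.

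Concretely, for the subgraph side of the statement, I would argue that a set $X \subseteq V(G_{H,\varphi})$ hits every induced $C_4$ in $G_{H,\varphi}$ if and only if it hits every (not necessarily induced) $C_4$, and then quote the same reduction and the same application of Lemma~\ref{lem:clean3SAT} used in Theorem~\ref{thm:lower-bound-complete-bipartite}. The main (and really the only) obstacle is the verification that no spurious, non-induced copies of $C_4$ arise in $G_{H,\varphi}$; this is a routine inspection of the edges inside $M$, inside $A \cup B$, and between $M$ and $A \cup B$, given that the sets $A$ and $B$ only connect to $M$ via the functions $f_{C,\ell}$ and that distinct variable/clause gadgets are pairwise non-adjacent outside $M$. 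Once this is checked, the corollary is immediate.
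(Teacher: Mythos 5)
Your overall route --- identify $C_4 = K_{2,2}$, instantiate Theorem~\ref{thm:lower-bound-complete-bipartite} with $h=2$ for the induced version, and then check that all occurrences of $K_{2,2}$ as a (not necessarily induced) subgraph in the constructed instance $G_{H,\varphi}$ are in fact induced, so that the same reduction applies verbatim to the subgraph version --- is exactly the argument the paper gives in the paragraph immediately preceding the corollary.

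However, your description of the central part $M$ is wrong for this reduction. In the construction of Theorem~\ref{thm:lower-bound-complete-bipartite}, \emph{no} edges are added inside $M$: the central part is an independent set, not a complete $h$-partite graph (you are most likely thinking of the construction in Theorem~\ref{thm:lower-bound-no-colors} for $H=K_{h+2}-e$, where $G_{H,\varphi}[M]$ is turned into a complete $h$-partite graph). This mix-up actually undermines the internal logic of your verification: in the $K_{h,h}$ reduction, the two $M$-vertices of the intended $K_{2,2}$ copies are $w_{f_{C,\ell}(1),1}$ and $w_{f_{C,\ell}(2),2}$, which lie in \emph{different} columns; if $M$ were complete bipartite as you assert, these two would be adjacent and the intended $K_{2,2}$ copies would not exist at all. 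The conclusion you need is still correct --- every pair of $M$-vertices is non-adjacent --- but the reason is simply that $M$ is edgeless, not that it contains ``no edge within a column.'' Your analysis of the $A\cup B$ side (the $h-1$ clones $a^{\beta}_{x,C,\ell}$ together with $b_{C,\ell}$ are pairwise non-adjacent) is correct, and once the description of $M$ is fixed, your proof matches the paper's.
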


As mentioned in~\cite{Pilipczuk11}, \textsc{$C_4$-S-Deletion} can be easily solved in time $\Ocal^*(2^{\Ocal(t^{2})})$. This fact together with Theorem~\ref{thm:generic-algo-no-colors} imply that both lower bounds of Corollary~\ref{cor:C4} are tight.

We can obtain lower bounds for other graphs $H$ that are ``close'' to a complete bipartite graph. Indeed, note that the lower bound of Theorem~\ref{thm:lower-bound-complete-bipartite} also applies to the graph $H$ obtained from $K_{h,h}$ by turning one of the two parts into a clique: the same reduction works similarly, and the only change in the construction is to turn the whole central part $M$ into a clique. We can also consider complete bipartite graphs $K_{a,b}$ with parts of different sizes, by letting the number of columns of the central part $M$ be equal to $\max\{a,b\}$, hence obtaining a lower bound of $\Ocal^*(2^{o(t^{\max\{a,b\}})})$. Similarly, we can also turn one of the two parts of
$K_{a,b}$ into a clique, and obtain the same lower bound. In particular, in this way we can obtain a lower bound of $\Ocal^*(2^{o(t^{h})})$ for the graph $H$ obtained from $K_{h+3}$ by removing the edges in a triangle.

\section{Lower bounds for \textsc{Colorful $H$-IS-Deletion}}
\label{sec:with-colors}

Our main reduction for the colored version is again strongly inspired by the corresponding reduction of Cygan et al.~\cite[Theorem 2]{CyganMPP17} for the non-induced version, again based on a reduction from the 3-\textsc{Sat} problem restricted to clean formulas and the frame graph defined in Section~\ref{sec:lower-bounds-no-colors}. The main difference with respect to their reduction is that in the non-induced version, the graph $H$ is required to contain a connected component that is neither a {\sl clique} nor a {\sl path}, while for the induced version we only require a component that is not a {\sl clique}, and therefore we need extra arguments to deal with the case where all the connected components of $H$ are paths. In particular, in the proof of Theorem~\ref{thm:lower-bound-colors}, the definition of the graphs $L_A$ and $L_B$ where the graph $H_0$ is a path is inspired from the proof of~\cite[Theorem 22]{CyganMPP17}.

\begin{theorem}\label{thm:lower-bound-colors}
Let $H$ be a graph having a connected component on $h$ vertices that is not a clique. Then {\sc Colorful $H$-IS-Deletion} cannot be solved in time $\Ocal^*(2^{o(t^{h-2})})$ unless the \ETH fails, where $t$ is the width of a given tree decomposition of the input graph.
\end{theorem}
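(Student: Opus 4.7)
The plan is to reduce from clean 3-\textsc{Sat} via Lemma~\ref{lem:clean3SAT}, adapting the general frame of Section~\ref{sec:lower-bounds-no-colors} to the colorful setting, in the spirit of the colored reduction of Cygan et al.~\cite{CyganMPP17} for the non-induced version. Let $H^*$ be a connected component of $H$ on $h$ vertices that is not a clique, and fix two non-adjacent vertices $a,b \in V(H^*)$, writing $V(H^*)\setminus\{a,b\}=\{v_1,\dots,v_{h-2}\}$. I would build a frame graph with $h-2$ columns in the central part $M$, giving $s=\Theta(n^{1/(h-2)})$ and a natural tree decomposition of width $t=\Ocal(s)$, so that an $\Ocal^*(2^{o(t^{h-2})})$ algorithm would translate into a $2^{o(n)}$ algorithm for clean 3-\textsc{Sat}. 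The coloring $\sigma$ assigns color $a$ to every vertex of $A$, color $b$ to every vertex of $B$, color $v_j$ to every vertex in column $j$ of $M$, and the remaining colors of $V(H)$ only inside pinned gadgets introduced below. Inside $M$, add an edge $\{w_{i,j}, w_{i',j'}\}$ with $j\neq j'$ exactly when $v_jv_{j'}\in E(H^*)$, and keep same-column vertices non-adjacent (their shared color prevents them from coexisting in a colorful copy). Using pairwise distinct functions $f_{C,\ell}:[h-2]\to[s]$, connect $A$ to $M$ by setting $a_{x,C,\ell}\sim w_{i,j}$ precisely when $av_j\in E(H^*)$ and $i=f_{C,\ell}(j)$, define $B$-to-$M$ adjacencies symmetrically via the neighborhood of $b$ in $H^*$, and leave $A$ and $B$ non-adjacent in accordance with $ab\notin E(H^*)$.

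Each attached copy of $L$ inside the variable and clause gadgets will be implemented as a small gadget whose internal vertices carry colors in $V(H^*)$ and whose structure, together with the two attachment vertices, realizes one or more colorful induced $H^*$-subgraphs; the internal vertices of distinct $L$-copies will be made pairwise disjoint and mutually non-adjacent, so that no colorful copy can cross two gadgets. Colors and adjacencies inside $L$ are tuned so that only vertex covers of the auxiliary 4-cycle (in each variable gadget) or clique (in each clause gadget) destroy all colorful copies within the budget $k=5n-m$, yielding an analogue of property~{\sf P2}. For each component of $H$ other than $H^*$, attach $k+1$ pairwise disjoint pinned copies with the appropriate coloring; hitting all of them would exceed the budget, so every solution leaves at least one intact realization of each non-$H^*$ component, and the problem reduces to hitting colorful induced copies of $H^*$ in the main construction.

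Property~{\sf P3} in the colorful setting follows cleanly from the coloring: any colorful induced $H^*$-subgraph uses exactly one color-$a$ vertex (necessarily in $A$) and one color-$b$ vertex (necessarily in $B$); these must be non-adjacent since $ab\notin E(H^*)$ and $A,B$ are non-adjacent by construction, and the remaining roles $v_1,\dots,v_{h-2}$ are forced to come from $M$, because the internal vertices of $L$-copies are isolated from $A,B$-vertices of other gadgets. The pairwise distinctness of the functions $f_{C,\ell}$ then ensures that a consistent choice of column representatives with the correct adjacency pattern to the chosen $a_{x,C,\ell}$ and $b_{C',\ell'}$ exists only when $(C,\ell)=(C',\ell')$, at which point an adaptation of Lemma~\ref{lem:properties} with $h$ replaced by $h-2$ yields the desired lower bound. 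The hard part will be the design of the $L$-gadgets: in the colorful regime, two attachment vertices of the same color cannot both lie in a single colorful induced $H^*$-subgraph, so the natural choice $L=H^*$ used in Section~\ref{sec:lower-bounds-no-colors} has to be replaced by a carefully engineered variant that nonetheless enforces property~{\sf P2} tightly, and ensuring this calibration is the main technical obstacle of the reduction.
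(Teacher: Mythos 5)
Your high-level plan matches the paper's proof: reduce from clean 3-\textsc{Sat}, use the frame graph with $h-2$ columns, color $A$ with the role of one non-adjacent vertex, $B$ with the other, and the columns of $M$ with the remaining $h-2$ roles, then use $f_{C,\ell}$ to force a matching of $A$- and $B$-vertices. However, there are two concrete gaps.

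First, your description of the edges between $A$ (resp.\ $B$) and $M$ is incorrect. You add the edge $\{a_{x,C,\ell},w_{i,j}\}$ \emph{only when} $av_j\in E(H^*)$ and $i=f_{C,\ell}(j)$, and make $a_{x,C,\ell}$ anticomplete to column $j$ whenever $av_j\notin E(H^*)$. In that non-edge case the function $f_{C,\ell}$ encodes nothing: \emph{every} vertex of column $j$ is a valid non-neighbor of $a_{x,C,\ell}$. Since the edges internal to $M$ are row-independent, the only constraints on choosing a column representative come from adjacency to the chosen $a$- and $b$-vertices, so property~{\sf P3} can fail whenever there is no column $j$ with $av_j\in E(H^*)$ \emph{and} $bv_j\in E(H^*)$ on which $f_{C,\ell}$ and $f_{C',\ell'}$ disagree. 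Indeed, take $H^*=P_h$ with $a,b$ the two endpoints: no internal column is adjacent to both $a$ and $b$, so with your adjacency rule all $(C,\ell),(C',\ell')$ pairs give a colorful copy. The fix, which is what the paper does, is to use the ``complementary'' encoding for non-edges: if $\{z_0,z_j\}\notin E(H_0)$, make $a_{x,C,\ell}$ adjacent to \emph{all} vertices $w_{i,j}$ with $i\neq f_{C,\ell}(j)$, so that the unique non-neighbor is still pinned to $w_{f_{C,\ell}(j),j}$. (Analogously for $B$.)

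Second, the $L$-gadget design is the heart of the argument and you explicitly leave it as ``the main technical obstacle''. The paper resolves it by constructing $L_A$ (and symmetrically $L_B$) from three copies of $H_0$, glued at two non-separating vertices $z_\beta,z_\gamma\neq z_0$. This works directly when $H_0$ is not a path; when $H_0$ \emph{is} a path the gluing must be supplemented by extra edges between $H_0^2$ and $H_0^3$ to destroy the spurious induced paths that cross the gluing vertex, and Claim~\ref{cl:VCbehavior} then shows that the only colorful induced $\sigma$-$H_0$-subgraphs in $L_A$ are the three copies of $H_0$, yielding exactly the two minimum hitting sets needed for property~{\sf P2}. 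Related to this, the budget in the colorful construction is $k=15n-4m$, not $5n-m$: each $L$-copy forces an extra internal vertex into any hitting set, and the count of $L$-copies must be folded into $k$. Without the explicit $L$-gadget and the adjusted budget, property~{\sf P2} is asserted rather than established, so the proof is incomplete.
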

\begin{proof}
Let $H_0, H_1, \ldots, H_p$ be the connected components of $H$, where $H_0$ is not a clique. Hence, $|V(H_0)| \geq 3$. As in Section~\ref{sec:lower-bounds-no-colors}, we will again reduce from the 3-\textsc{Sat} problem restricted to clean formulas. Given such a formula $\varphi$ with $n$ variables and $m$ clauses, we proceed to construct an instance $(G_{H,\varphi}, \sigma)$ of \textsc{Colorful $H$-IS-Deletion} such that $\varphi$ is satisfiable if and only if $G$ has a set $X \subseteq V(G)$ of size at most $k:= 15n -4m$ hitting all induced $\sigma$-$H$-subgraphs of $G$, satisfying properties~{\sf P1}, {\sf P2}, and~{\sf P3} (the latter one, concerning only  colorful copies of $H$, of course), and  then Lemma~\ref{lem:properties} will imply the claimed lower bound. The choice of the budget of the current reduction will become clear below, and does not affect the main properties of the reduction.

We start with the frame graph $F_{H,\varphi}$ defined in Section~\ref{sec:lower-bounds-no-colors},  with $h=|V(H_0)| -2 \geq 1$, $T=\emptyset$, and $L$ to be specified later. Let the vertices of $H_0$ be labeled $z_0,z_1, \ldots,z_{h+1}$ such that $\{z_0,z_{h+1}\} \notin E(H_0)$; note that this is always possible since $H_0$ is not a clique. We define the $H$-coloring $\sigma$ of $G_{H,\varphi}$ starting from the vertices of $F_{H,\varphi}$ except for the non-attachment vertices of the graphs $L$, whose coloring will be defined later together with the description of $L$. Namely, for each variable $x$ and each clause $C$ containing $\ell \in \{x, \bar{x}\}$, we define $\sigma(a_{x,C,\ell}) = \sigma(a_x) = z_0$ and $\sigma(b_{C,\ell})=z_{h+1}$. For every $i \in [s]$ and $j \in [h]$ we define $\sigma(w_{i,j}) = z_j$. That is, the vertices $a_{x,C,\ell}$ (resp. $b_{C,\ell}$) are mapped to $z_0$ (resp. $z_{h+1}$), and the whole $j$-th column of $M$ is mapped to $z_j$ for $j \in [h]$. We now add edges among the already colored vertices as follows. Within $M$, the edges mimic those in $H_0$: for any two vertices $w_{i,j}, w_{i',j'} \in M$ we add the edge $\{w_{i,j}, w_{i',j'}\}$ in $G_{H,\varphi}$ if and only if $\{z_j,z_{j'}\} \in E(H_0)$. As for the edges between $A \cup B$ and $M$, the functions $f_{C,\ell}$ capture the existence or non-existence of the edges in $H_0$ between the corresponding vertices. Namely, for each clause $C$ and each literal $\ell$ in $C$, where $\ell \in \{x,\bar{x}\}$ for some variable $x$, and for every $j \in [h]$, we do the following:

\begin{itemize}
\item If $\{z_0,z_j\}\in E(H_0)$, then we add the edge $\{a_{C,x,\ell}, w_{f_{C,\ell}(j), j}\}$. Otherwise (i.e., if $\{z_0,z_j\}\notin E(H_0)$), we add the edges $\{a_{C,x,\ell}, w_{i, j}\}$ for every $i \in [h] \setminus \{f_{C,\ell}(j)\}$.
\item  Similarly, if $\{z_{h+1},z_j\}\in E(H_0)$, then we add the edge  $\{b_{C,\ell}, w_{f_{C,\ell}(j), j}\}$. Otherwise (i.e., if $\{z_{h+1},z_j\}\notin E(H_0)$), we add the edges $\{b_{C,\ell}, w_{i, j}\}$ for every $i \in [h] \setminus \{f_{C,\ell}(j)\}$.
\end{itemize}

We now proceed to describe the graph $L$ together with its $H$-coloring. In fact, we define different (but very similar) graphs $L$ for the copies to be attached in $A$ and $B$; we call them $L_A$ and $L_B$, respectively. We start with the definition of $L_A$, and we distinguish two cases according to $H_0$:

\begin{itemize}
\item Suppose first that $H_0$ is not a path. As observed in~\cite{CyganMPP17}, it is easy to see that every graph that is not a path contains at least three vertices that are not separators. Let $z_{\beta}, z_{\gamma}$ be two such non-separating vertices of $H_0$ different from $z_0$. We define $L_A$ as the graph obtained from the disjoint union of three copies $H^1_0,H^2_0,H^3_0$ of $H_0$ by identifying the vertices $z_{\beta}$ of $H^1_0$ and $H^2_0$, and the vertices $z_{\gamma}$ of $H^2_0$ and $H^3_0$. See Figure~\ref{fig:ExampleLA}(a) for an example.

\item Suppose now that $H_0$ is a path. Let $z_{\beta}$ be an endvertex of $H_0$ different from $z_0$, and let $z_{\gamma}$ be an internal vertex of $H_0$ different from $z_0$. Note that the latter choice is always possible as $|V(H_0)| \geq 3$ and, if $z_0$ were the only internal vertex of $H_0$, then $z_0$ would not have a non-neighbor in $H_0$, contradicting the choice of $z_0$ and $z_{h+1}$. We define $L_A$ as the graph obtained from the disjoint union of three copies $H^1_0,H^2_0,H^3_0$ of $H_0$ by identifying the vertices $z_{\beta}$ of $H^1_0$ and $H^2_0$, identifying the vertices $z_{\gamma}$ of $H^2_0$ and $H^3_0$, and adding an edge between every vertex in $V(H_0^2) \setminus \{z_{\gamma}\}$ and every vertex in $V(H_0^3) \setminus \{z_{\gamma}\}$. See Figure~\ref{fig:ExampleLA}(b) for an example.
\end{itemize}

\begin{figure}[ht]
\begin{center}
\includegraphics[scale=.9]{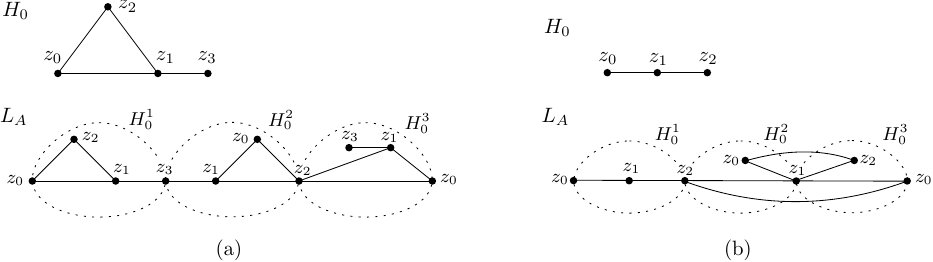}
\end{center}
\caption{Examples of the construction of $L_A$: (a) A graph $H_0$ that is not a path, with $z_{\beta}=z_3$ and $z_{\gamma}=z_2$. (b) A path $H_0$, with $z_{\beta}=z_2$ and $z_{\gamma}=z_1$. For better visibility, the edges $\{z_0,z_0\}$ and $\{z_2,z_2\}$ between $H_0^2$ and $H_0^3$ in $L_A$ are not shown in the figure.}
\label{fig:ExampleLA}
\end{figure}

In both the above cases, when we attach a copy of $L_A$ between two vertices $u,v \in A$, we identify $u$ and $v$ with the vertices $z_0$ of the first and third copies of $H_0$, respectively. The $H$-coloring $\sigma$ of $L_A$ is defined naturally, that is, each vertex gets its original color in $H_0$.

The graph $L_B$ is defined in a completely symmetric way, just by replacing vertex $z_0 \in V(H_0)$ in the above definition of $L_A$ by vertex $z_{h+1} \in V(H_0)$.

\begin{claim}\label{cl:VCbehavior}
In the graph $L_A$ (resp. $L_B$) defined above, where $u$ and $v$ are the attachment vertices, there are exactly two vertex sets $X_1,X_2 \subseteq V(L_A)\ (\text{resp. } V(L_B))$  of minimum size hitting all induced $\sigma$-$H_0$-subgraphs of $L_A$ (resp. $L_B$), $|X_1| = |X_2| =2$, $X_1 \cap \{u,v\} = \{u\}$, and $X_2 \cap \{u,v\} = \{v\}$.
\end{claim}
\begin{proof} By symmetry, it suffices to present the proof for $L_A$. In order to prove the claim, it is enough to prove that there are exactly three induced $\sigma$-$H_0$-subgraphs in $L_A$, corresponding to the three copies $H^1_0,H^2_0,H^3_0$ of $H_0$. Indeed, once this is proved, there are exactly two minimum-sized hitting sets in $L_A$: $u$ together with vertex $z_{\gamma} \in V(H_0^2) \cap V(H_0^3)$, and $v$ together with vertex $z_{\beta} \in V(H_0^1) \cap V(H_0^2)$.

So suppose for contradiction that there exists an induced $\sigma$-$H_0$-subgraph $\tilde{H_0}$ in $L_A$ containing vertices in $V(H_0^i) \setminus V(H_0^j)$ and $V(H_0^j) \setminus V(H_0^i)$ for some distinct $i,j \in [3]$. (For notational simplicity, we interpret an induced $\sigma$-$H_0$-subgraph as an induced subgraph of $L_A$ isomorphic to $H_0$, with matching colors.) We distinguish two cases depending on $H_0$.

If $H_0$ is not a path, the existence of such an $\tilde{H_0}$ in $L_A$ would imply, by the construction of $L_A$ and since $H_0$ is connected, that at least one of $z_{\beta}$ and $z_{\gamma}$ is a separator in $H_0$, contradicting their choice.

Otherwise, if $H_0$ is a path, first note that since the vertex $z_{\beta} \in V(H_0^1) \cap V(H_0^2)$ is an endvertex of $H_0$, $\tilde{H_0}$ cannot contain vertices in $V(H_0^1) \setminus V(H_0^2)$ and $V(H_0^2) \setminus V(H_0^1)$. Hence, necessarily $\tilde{H_0}$ contains vertices in both $V(H_0^2) \setminus V(H_0^3)$ and $V(H_0^3) \setminus V(H_0^2)$. In particular, note that $z_{\gamma} = V(H_0^2) \cap V(H_0^3) \in V(\tilde{H_0})$. Since in $L_A$, vertex $z_{\gamma}$ was chosen as an internal vertex of $H_0$, let $z_i$ and $z_j$ be the two neighbors of $z_{\gamma}$ in $\tilde{H_0}$. Recall that in $L_A$ we added all edges between $V(H_0^2) \setminus \{z_{\gamma}\}$ and $V(H_0^3) \setminus \{z_{\gamma}\}$. We distinguish two cases:
\begin{itemize}
\item Suppose first that both $z_i,z_j \in V(H_0^2) \setminus V(H_0^3)$ or $z_i, z_j \in V(H_0^3) \setminus V(H_0^2)$. Assume that the former case holds --the other one being symmetric-- and let $z_{r}$ be a vertex of $\tilde{H_0}$ in $V(H_0^3) \setminus V(H_0^2)$, which exists by hypothesis. Then $\{z_{\gamma}, z_i, z_j,z_r\}$  induces a $C_4$ in  $\tilde{H_0}$, a contradiction since $\tilde{H_0}$ is a path.
\item Otherwise, suppose without loss of generality that $z_i \in V(H_0^2) \cap V(H_0^3)$ and $z_j \in V(H_0^3) \cap V(H_0^2)$. Then $\{z_{\gamma}, z_i, z_j\}$  induces a $C_3$ in  $\tilde{H_0}$ (for example, in Figure~\ref{fig:ExampleLA}(b), the vertices $\{z_1,z_0,z_2\}$ with $z_1 \in V(H_0^2) \cap V(H_0^3)$ induce a $C_3$), a contradiction again.\vspace{-.4cm}
\end{itemize}\end{proof}


Note that Claim~\ref{cl:VCbehavior} justifies  the budget $k$ of our eventual instance $(G_{H,\varphi}, \sigma)$ of \textsc{Colorful $H$-IS-Deletion}: any optimal solution $X$ needs to contain one of the pairs
$(a_{x,C_1,\ell}, a_{x,C_3,\ell})$ and $(a_x,a_{x,C_2,\bar{\ell}})$ for each variable $x$, and $|C|-1$ vertices in $\{b_{C,\ell} \mid \ell \in C\}$ for each clause $C$. Moreover, $X$ contains an extra internal vertex for each of the gadgets $L_A$ and $L_B$. Taking into account that the number of clauses in $\varphi$ with exactly three (resp. two) literals equals $3n-2m$ (resp. $3m-3n$), the number of gadgets $L_A$ or $L_B$ equals $4n + (3m-3n) + 3(3n-2m)$. Therefore, this amounts to a budget of
$$
2n + \sum_{C \in \varphi}(|C|-1) + 4n + (3m-3n) + 3(3n-2m)=  15n-4m = k.
$$
Finally, for every $i \in [p]$, add $k+1$ disjoint copies of $H_i$, and color their vertices according to their colors in $H$. This concludes the construction of $(G_{H,\varphi},\sigma)$,
which clearly satisfies property~{\sf P1}. Note that since for every  $i \in [p]$ the number of copies of $H_i$ in $G_{H,\varphi}$ exceeds the budget, hitting all colorful induced copies of $H$ in $G_{H,\varphi}$ with at most $k$ vertices is equivalent to hitting all colorful induced copies of component $H_0$. Therefore, Claim~\ref{cl:VCbehavior} implies that  $(G_{H,\varphi},\sigma)$ satisfies property~{\sf P2} as well.

Consider a set $X \subseteq V(G_{H,\varphi})$  as in property~{\sf P3}, and let $\tilde{H}$ be an induced $\sigma$-$H$-subgraph of $G_{H,\varphi} \setminus X$. Since for every $j \in [h]$ the only vertices of $G_{H,\varphi}$ colored $z_j$ by $\sigma$ are those in the $j$-th column of $M$, we conclude that, for every $j \in [h]$,  $\tilde{H}$ contains exactly one vertex from the $j$-column of $M$. Since $H_0$ is connected and the only vertices in $G_{H,\varphi}$ colored $z_0$ (resp. $z_{h+1}$) with neighbors in $M$ are those of type $a_{x,C,\ell}$ (resp. $b_{C,\ell}$), it follows that $\tilde{H}$ contains exactly one vertex $a_{x,C,\ell}$ and exactly one vertex $b_{C',\ell'}$.
The properties of the functions $f_{C,\ell}$, which define the edges between $A \cup B$ and $M$, and the fact that $\tilde{H}$ needs to be an {\sl induced} $\sigma$-$H$-subgraph, imply that necessarily $(C, \ell) = (C', \ell')$, and therefore $(G_{H,\varphi},\sigma)$ satisfies property~{\sf P3} and the theorem follows by Lemma~\ref{lem:properties}.
\end{proof}

When $H$ is a connected graph, the lower bound of Theorem~\ref{thm:lower-bound-colors}
together with the algorithms given by Proposition~\ref{Kh_single} and Theorem~\ref{thm:generic-algo-colors} completely settle, under the \ETH, the asymptotic complexity of \textsc{Colorful $H$-IS-Deletion} parameterized by treewidth. Note that, in particular, Theorem~\ref{thm:lower-bound-colors} applies when $H$ is path, in contrast to the subgraph version that can be solved in polynomial time~\cite{CyganMPP17}.

Therefore, what remains is to obtain tight lower bounds when $H$ is disconnected. In particular, Theorem~\ref{thm:lower-bound-colors} cannot be applied at all when all the connected components of $H$ are cliques, since the machinery that we developed (inspired by Cygan et al.~\cite{CyganMPP17}) using the framework graph $F_{H,\varphi}$ crucially needs two non-adjacent vertices in the same connected component. Let us now focus on those graphs, sometimes called \emph{cluster} graphs in the literature.


As mentioned in Section~\ref{sec:algos-no-colors}, both \textsc{Colorful $K_2$-IS-Deletion} and \textsc{Colorful $I_2$-IS-Deletion} can be solved in polynomial time. In our next result we show that if $H$ is slightly larger than these two graphs (namely, $K_2$ or $I_2$), then \textsc{Colorful $H$-IS-Deletion} becomes hard. Namely, we provide a single-exponential lower bound for the following three graphs $H$ on three vertices that are {\sl not} covered by Theorem~\ref{thm:lower-bound-colors}: $K_3$, $I_3$, and $K_2 + K_1$. Note that these lower bounds are tight by the algorithm of Theorem~\ref{thm:generic-algo-colors}.

\begin{theorem}\label{thm:LBI3}
Let  $H \in \{K_{3}, I_3, K_2 + K_1\}$. Then, unless the \ETH fails, the  {\sc Colorful $H$-IS-Deletion} problem cannot be solved in time $\Ocal^*(2^{o(t)})$, where $t$ is the width of a given tree decomposition of the input graph.
\end{theorem}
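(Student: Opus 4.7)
The plan is to prove the lower bound for each case by exhibiting a polynomial-time reduction from clean 3-\textsc{Sat} to \textsc{Colorful $H$-IS-Deletion} that outputs an instance whose underlying graph has $\Ocal(n+m)$ vertices, where $n$ and $m$ are the number of variables and clauses of the input formula. Since for clean formulas $m=\Ocal(n)$, the output has $\Ocal(n)$ vertices, and therefore its treewidth $t$ is $\Ocal(n)$ trivially (any graph has treewidth at most its order). An algorithm running in time $\Ocal^*(2^{o(t)})$ would then decide clean 3-\textsc{Sat} in time $2^{o(n)}$, contradicting the \ETH via Lemma~\ref{lem:clean3SAT}. Thus the whole task reduces to designing, for each of the three graphs $H$, a linear-size polynomial reduction.

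For each $H$, I would follow the high-level template used in Section~\ref{sec:lower-bounds-no-colors} and in the proof of Theorem~\ref{thm:lower-bound-colors}: one variable gadget per variable $x_i$ built around a pair of colored vertices $(T_i,F_i)$ plus a few auxiliary vertices, designed so that any minimum hitting set contains exactly one of $T_i,F_i$ (encoding the truth assignment of $x_i$); and one clause gadget per clause $C_j$ that, via induced $\sigma$-$H$-subgraphs connecting $C_j$ to the three literal vertices it involves, forces that at least one literal be set to \true. Because $H$ has only three vertices, each gadget has constant size, so the total output is $\Ocal(n+m)$. For $H=K_3$ the forbidden pattern is a rainbow triangle, and one implements each gadget via a pair of such triangles sharing a common edge, so that hitting both costs one vertex precisely when it lies on the shared edge. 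For $H=I_3$ the construction is symmetric, with rainbow independent triples playing the role of rainbow triangles (and can alternatively be derived from the $K_3$ case by complementing the color-mixed part of the graph, since the complement of a rainbow triangle with respect to bichromatic edges is a rainbow $I_3$). For $H=K_2+K_1$, using the fact that $K_2+K_1=\overline{P_3}$, one builds analogous gadgets around an edge between a color-$1$ and a color-$2$ vertex together with a color-$3$ witness vertex non-adjacent to both endpoints of the edge.

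The main obstacle, common to all three cases, is to make the clause gadget enforce the disjunctive semantics \emph{exactly}: a minimum solution should hit every colorful induced copy of $H$ in the clause gadget only when at least one of the three literal vertices is already in the hitting set for truth-assignment reasons, rather than by paying for auxiliary clause vertices. The standard way to handle this, analogous to the duplication trick already used in the variable gadgets above, is to duplicate each auxiliary vertex of the clause gadget sufficiently many times so that hitting all induced copies of $H$ arising in the gadget without using a literal vertex strictly exceeds the allotted budget. Once this balancing is carried out gadget by gadget, the correctness of the reduction is established by a tight-budget analysis in the same spirit as Lemma~\ref{lem:properties}, adapted to the colorful setting in a routine way. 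Combined with the trivial treewidth bound above, this yields the single-exponential lower bound claimed in the statement.
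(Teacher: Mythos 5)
Your high-level framing is correct: the target is a polynomial-time reduction from a problem that is ETH-hard in time $2^{o(n)}$ producing an instance with $\Ocal(n)$ vertices, after which the trivial bound $t\le |V(G)|$ finishes the argument. You also correctly realize that the frame-graph machinery of Section~\ref{sec:lower-bounds-no-colors} cannot be invoked verbatim for $K_3$ (it requires two non-adjacent vertices in a component of $H$), so new gadgets are needed.

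However, your proposal has a genuine gap in the place where the real work lies. You route directly from clean 3-\textsc{Sat} with per-variable gadgets $(T_i,F_i)$ and per-clause gadgets, but the clause gadget is never actually specified: you say it should ``via induced $\sigma$-$H$-subgraphs connecting $C_j$ to the three literal vertices force at least one literal to be true,'' and then defer the crucial balancing to an unspecified ``duplication trick.'' For $H=K_3$ in the colorful setting, a $\sigma$-$K_3$-subgraph is a rainbow triangle, so the literal vertices participating in a clause triangle would need to receive distinct colors $z_1,z_2,z_3$. But a variable occurs in up to three clauses, so the same truth-setting vertex would need to be, simultaneously, the $z_1$-colored corner of one clause triangle and possibly a differently colored corner of another; a fixed $H$-coloring $\sigma$ cannot do this. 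To resolve it you would need one vertex per occurrence of each literal plus a consistency gadget across occurrences, at which point the construction is essentially undone and has to be rebuilt from scratch. None of this is addressed.

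The paper avoids the problem entirely by taking a different route. It reduces from \textsc{Vertex Cover} on graphs of maximum degree $3$ (itself ETH-hard in $2^{o(n)}$ by the classical reduction from clean 3-\textsc{Sat}), and then designs a single \emph{edge-replacement} gadget $L$: a chain of three triangles in which the two attachment vertices are the $z_1$-colored endpoints of the first and third triangle. Every original vertex of the VC instance receives color $z_1$, so no rainbow triangle can use two original vertices, and the gadget has exactly two optimal local hitting sets of size $2$, each containing exactly one attachment vertex. This pushes the entire variable/clause semantics into the well-understood VC instance and leaves only a clean local analysis of $L$. The cases $I_3$ and $K_2+K_1$ then follow by complementing the bichromatic (respectively, only the $z_3$-incident) edges, as you correctly anticipated. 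So the complementation observation in your proposal is right, but the core gadget design for $K_3$ is missing, and the step you flagged as ``routine'' is precisely the one the paper solves by changing the source problem.
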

\begin{proof}
We will prove that none of the considered problems can be solved in time $2^{o(n)}$ under the \ETH, which clearly implies the statement of the theorem. For this, we reduce from the \textsc{Vertex Cover} problem restricted to input graphs with maximum degree at most three. To see that this problem cannot be solved in time $2^{o(n)}$ under the \ETH, where $n$ is the number of vertices of the input graph, one can apply the classical \NP-hardness reduction~\cite{GareyJ79} from 3-\textsc{Sat} to \textsc{Vertex Cover}, but restricting the input formulas to be {\sl clean}. Then the result follows from Lemma~\ref{lem:clean3SAT}.

We first present a reduction for $H=K_3$. Given an instance $G$ of \textsc{Vertex Cover}, with $|V(G)|=n$, $|E(G)|=m$, and $\Delta(G) \leq 3$, we proceed to construct an instance $(G_{K_3}, \sigma)$ of \textsc{$H$-IS-Deletion} with $|V(G_{K_3})| = \Ocal(n)$ such that $G$ has a vertex cover of size at most $k$ if and only if $G_{K_3}$ has a set of size at most $k+m$ hitting all (induced) $\sigma$-$K_3$-subgraphs. Note that this will prove the desired result, as $\tw(G_{K_3}) \leq  |V(G_{K_3})| = \Ocal(n)$, and a tree decomposition of $G_{K_3}$ achieving that width consists of just one bag containing all vertices.

Let $V(K_3) = \{z_1,z_2,z_3\}$ and let $L$ be the graph obtained from three disjoint copies of $K_3$ by identifying vertices $z_2$ of the first and second copies, and vertices $z_3$ of the second and third copies. We define $G_{K_3}$ as the graph obtained from $G$ be replacing each edge $\{u,v\} \in E(G)$ by the graph $L$, identifying vertex $u$ (resp. $v$) with vertex $z_1$ of the first (resp. third) copy of $K_3$ in $L$. We define the $K_3$-coloring $\sigma$ of $G_{K_3}$ in the natural way, that is, each vertex of $G_{K_3}$ gets the color of its corresponding vertex in the gadget $L$. Note that all vertices that were originally in $G$ get color $z_1$. Since $\Delta(G) \leq 3$, it follows that $|V(G_{K_3})|  = |V(G)| + |E(G)|\cdot (|V(L)| -2 ) = |V(G)| + 5|E(G)| \leq 17 |V(G)|/2 = \Ocal(n)$.

By construction of $G_{K_3}$, each edge of $G$ gives rise to exactly three (induced) $\sigma$-$K_3$-subgraphs in $G_{K_3}$, and all $\sigma$-$K_3$-subgraphs in $G_{K_3}$ are of this type. For each gadget $L$, there are exactly two vertex sets of minimum size hitting its three $\sigma$-$K_3$-subgraphs, of size two, each of them containing exactly one of the original vertices of $G$. Therefore, a vertex cover of $G$ of size at most $k$ can be easily transformed into a set $X \subseteq V(G_{K_3})$ of size at most $k+m$ hitting all $\sigma$-$K_3$-subgraphs of $G_{K_3}$, and vice versa.

\smallskip

Let now $H=I_3$, with $V(I_3) = \{z_1,z_2,z_3\}$. Given an instance $G$ of \textsc{Vertex Cover}, with $|V(G)|=n$ and $\Delta(G) \leq 3$,  we start with the instance $(G_{K_3},\sigma)$ of \textsc{$K_3$-IS-Deletion} defined above, and we construct an instance $(G_{I_3},\sigma')$ of \textsc{$I_3$-IS-Deletion} such that $V(G_{I_3}) = V(G_{K_3})$, $\sigma' = \sigma$ (by associating the labels of $V(K_3)$ and $V(I_3)$), and $E(G_{I_3})$ defined as the tripartite complement of $E(G_{K_3})$, that is, for every pair of vertices $u,v \in V(G_{I_3})$, $\{u,v\} \in E(G_{I_3})$ if and only if $\sigma'(u) \neq \sigma'(v)$ and $\{u,v\} \notin E(G_{K_3})$. Since $|V(G_{I_3})| = |V(G_{K_3})| = \Ocal(n)$ and there is a one-to-one correspondence between induced $\sigma$-$K_3$-subgraphs in $G_{K_3}$ and induced $\sigma$-$I_3$-subgraphs in $G_{I_3}$, the result follows.

\smallskip

Finally, let now $H=K_2 + K_1$, with $V(H) = \{z_1,z_2,z_3\}$ such that $z_1$ and $z_2$ are adjacent. Similarly, we construct an instance $(G_{K_2+K_1},\sigma)$ of \textsc{$(K_2+K_1)$-IS-Deletion} starting from $(G_{K_3},\sigma)$, but in this case we only complement the neighborhood of the vertices $u \in V(G_{K_2+K_1})$ with $\sigma(u) = z_3$, keeping the set of vertices colored $z_3$ an independent set. Again, there is a one-to-one correspondence between induced $\sigma$-$K_3$-subgraphs in $G_{K_3}$ and induced $\sigma$-$(K_2 + K_1)$-subgraphs in $G_{K_2 + K_1}$, and the proof is complete.
 \end{proof}

The proof of Theorem~\ref{thm:LBI3} can be easily adapted to $H=P_3$ by complementing the appropriate neighborhoods, hence obtaining a lower bound of $\Ocal^*(2^{o(t)})$ for
\textsc{Colorful $P_3$-IS-Deletion}. Note, however, that this lower for $P_3$ bound already follows from Theorem~\ref{thm:lower-bound-colors}.

It is also easy to adapt the proof of Theorem~\ref{thm:LBI3} to larger graphs, but then the lower bound of $\Ocal^*(2^{o(t)})$ is not tight anymore. For example, for $H=2K_2$ (the disjoint union of two edges), with $V(H) = \{z_1,z_2,z_3,z_4\}$ such that the edges are $\{z_1,z_2\}$ and $\{z_3,z_4\}$, it suffices to take the instance $(G_{K_2+K_1},\sigma)$ of \textsc{$(K_2+K_1)$-IS-Deletion} defined above and to add a private neighbor colored $z_4$ for every vertex of $G_{K_2+K_1}$ colored $z_3$. Also, for $H=K_h$ with $h \geq 4$, in the gadget $L$ we just replace the triangles by cliques of size $h$, and for $H=I_h$ with $h \geq 4$, we  take the $h$-partite complement of the previous instance of \textsc{$K_h$-IS-Deletion}.




\section{Further research}
\label{sec:concl}

Concerning \textsc{$H$-IS-Deletion}, the complexity gap is still quite large for most graphs $H$, as our lower bounds (Theorems~\ref{thm:lower-bound-no-colors},~\ref{thm:lower-bound-no-colors-WEAKER},~\ref{thm:lower-bound-no-colors-2}, and~\ref{thm:lower-bound-complete-bipartite})
only apply to graphs $H$ that are ``close'' to cliques or complete bipartite graphs. In particular, Theorem~\ref{thm:lower-bound-no-colors} provides tight bounds for $P_3$ or $K_4-e$ (the {\sl diamond}), but we do not know the tight function $f_H(t)$ for other small graphs $H$ on four vertices such as $P_4$, $K_{1,3}$ (the {\sl claw}), or $2K_2$.

We think that for most graphs $H$ on $h$ vertices, the upper bound
$f_H(t) = 2^{\Ocal(t^{h-2})}$ given by Theorem~\ref{thm:generic-algo-no-colors} is the asymptotically tight function, and that the single-exponential algorithms for cliques and independent sets are isolated exceptions. The reason is that, in contrast to the subgraph version, when hitting induced subgraphs, edges and non-edges behave essentially in the same way when performing dynamic programming, as one has to keep track of both the existence and the non-existence of edges in order to construct the tables, and storing this information seems to be unavoidable.

Concerning the algorithm for \textsc{$I_h$-IS-Deletion} running in time  $2^{\Ocal(t)} \cdot n^{h}$ (Theorem~\ref{thm:algo-Is}), 
it would be interesting to find, if it exists, an \FPT algorithm parameterized by both $t$ and $h$ while keeping the dependency on $t$ single-exponential, maybe even being linear in $n$.

\smallskip

As for \textsc{Colorful $H$-IS-Deletion}, in view of Theorems~\ref{thm:generic-algo-colors},~\ref{Kh_single},~\ref{thm:lower-bound-colors}, and~\ref{thm:LBI3}, only the cases where $H$ is a disjoint union of at least two cliques and $|V(H)| \geq 4$ remain open. In particular, we do not know the tight function when $H$ is an independent set or a matching with $|V(H)| \geq 4$.




\medskip

\noindent \textbf{Acknowledgement}. We would like to thank the anonymous reviewers for helpful remarks that improved the presentation of the manuscript, in particular for pointing out a previous flaw in the proof of Theorem~\ref{thm:lower-bound-no-colors-WEAKER}.

\bibliographystyle{abbrv}
\bibliography{Biblio}

\begin{thebibliography}{10}

\bibitem{AdlerDFST11}
Isolde Adler, Frederic Dorn, Fedor~V. Fomin, Ignasi Sau, and Dimitrios~M.
  Thilikos.
\newblock Faster parameterized algorithms for minor containment.
\newblock {\em Theoretical Computer Science}, 412(50):7018--7028, 2011.
\newblock \href {https://doi.org/10.1016/j.tcs.2011.09.015}
  {\path{doi:10.1016/j.tcs.2011.09.015}}.

\bibitem{BasteST20}
Julien Baste, Ignasi Sau, and Dimitrios~M. Thilikos.
\newblock A complexity dichotomy for hitting connected minors on bounded
  treewidth graphs: the chair and the banner draw the boundary.
\newblock In {\em Proc. of the 31st Annual ACM-SIAM Symposium on Discrete
  Algorithms (SODA)}, pages 951--970, 2020.
\newblock \href {https://doi.org/10.1137/1.9781611975994.57}
  {\path{doi:10.1137/1.9781611975994.57}}.

\bibitem{BasteST20-part1}
Julien Baste, Ignasi Sau, and Dimitrios~M. Thilikos.
\newblock Hitting minors on bounded treewidth graphs. {I.} {G}eneral upper
  bounds.
\newblock {\em {SIAM} Journal on Discrete Mathematics}, 34(3):1623--1648, 2020.
\newblock \href {https://doi.org/10.1137/19M1287146}
  {\path{doi:10.1137/19M1287146}}.

\bibitem{BasteST20-part2}
Julien Baste, Ignasi Sau, and Dimitrios~M. Thilikos.
\newblock {Hitting minors on bounded treewidth graphs. {II.} Single-exponential
  algorithms}.
\newblock {\em Theoretical Computer Science}, 814:135--152, 2020.
\newblock \href {https://doi.org/10.1016/j.tcs.2020.01.026}
  {\path{doi:10.1016/j.tcs.2020.01.026}}.

\bibitem{BasteST20-part3}
Julien Baste, Ignasi Sau, and Dimitrios~M. Thilikos.
\newblock {Hitting minors on bounded treewidth graphs. {III.} Lower bounds}.
\newblock {\em Journal of Computer and System Sciences}, 109:56--77, 2020.
\newblock \href {https://doi.org/10.1016/j.jcss.2019.11.002}
  {\path{doi:10.1016/j.jcss.2019.11.002}}.

\bibitem{BodlaenderCKN15}
Hans~L. Bodlaender, Marek Cygan, Stefan Kratsch, and Jesper Nederlof.
\newblock Deterministic single exponential time algorithms for connectivity
  problems parameterized by treewidth.
\newblock {\em Information and Computation}, 243:86--111, 2015.
\newblock \href {https://doi.org/10.1016/j.ic.2014.12.008}
  {\path{doi:10.1016/j.ic.2014.12.008}}.

\bibitem{BodlaenderDDFLP16}
Hans~L. Bodlaender, P{\aa}l~Gr{\o}n{\aa}s Drange, Markus~S. Dregi, Fedor~V.
  Fomin, Daniel Lokshtanov, and Michal Pilipczuk.
\newblock {A $c^k n$ $5$-Approximation Algorithm for Treewidth}.
\newblock {\em {SIAM} Journal on Computing}, 45(2):317--378, 2016.
\newblock \href {https://doi.org/10.1137/130947374}
  {\path{doi:10.1137/130947374}}.

\bibitem{BodlaenderHL14grap}
Hans~L. Bodlaender, Pinar Heggernes, and Daniel Lokshtanov.
\newblock Graph modification problems (dagstuhl seminar 14071).
\newblock {\em Dagstuhl Reports}, 4(2):38--59, 2014.
\newblock \href {https://doi.org/10.4230/DagRep.4.2.38}
  {\path{doi:10.4230/DagRep.4.2.38}}.

\bibitem{claw-free}
Flavia Bonomo{-}Braberman, Julliano~R. Nascimento, Fabiano de~S.~Oliveira,
  U{\'{e}}verton~S. Souza, and Jayme~Luiz Szwarcfiter.
\newblock Linear-time algorithms for eliminating claws in graphs.
\newblock In {\em Proc. of the 26th International Conference on Computing and
  Combinatorics (COCOON)}, volume 12273 of {\em LNCS}, pages 14--26, 2020.
\newblock \href {https://doi.org/10.1007/978-3-030-58150-3\_2}
  {\path{doi:10.1007/978-3-030-58150-3\_2}}.

\bibitem{Courcelle90}
Bruno Courcelle.
\newblock {The Monadic Second-Order Logic of Graphs. I. Recognizable Sets of
  Finite Graphs}.
\newblock {\em Information and Computation}, 85(1):12--75, 1990.
\newblock \href {https://doi.org/10.1016/0890-5401(90)90043-H}
  {\path{doi:10.1016/0890-5401(90)90043-H}}.

\bibitem{CrespelleDFG13asur}
Christophe Crespelle, Pål Grønås, Drange, Fedor~V. Fomin, and Petr~A.
  Golovach.
\newblock A survey of parameterized algorithms and the complexity of edge
  modification.
\newblock {\em CoRR}, abs/2001.06867, 2013.
\newblock URL: \url{https://arxiv.org/abs/2001.06867}.

\bibitem{CyganFKLMPPS15}
Marek Cygan, Fedor~V. Fomin, Lukasz Kowalik, Daniel Lokshtanov, D{\'{a}}niel
  Marx, Marcin Pilipczuk, Michal Pilipczuk, and Saket Saurabh.
\newblock {\em Parameterized Algorithms}.
\newblock Springer, 2015.
\newblock \href {https://doi.org/10.1007/978-3-319-21275-3}
  {\path{doi:10.1007/978-3-319-21275-3}}.

\bibitem{CyganMPP17}
Marek Cygan, D{\'{a}}niel Marx, Marcin Pilipczuk, and Michal Pilipczuk.
\newblock Hitting forbidden subgraphs in graphs of bounded treewidth.
\newblock {\em Information and Computation}, 256:62--82, 2017.
\newblock \href {https://doi.org/10.1016/j.ic.2017.04.009}
  {\path{doi:10.1016/j.ic.2017.04.009}}.

\bibitem{CyganNPPRW11}
Marek Cygan, Jesper Nederlof, Marcin Pilipczuk, Michal Pilipczuk, Johan M.~M.
  van Rooij, and Jakub~Onufry Wojtaszczyk.
\newblock Solving connectivity problems parameterized by treewidth in single
  exponential time.
\newblock In {\em Proc. of the 52nd Annual {IEEE} Symposium on Foundations of
  Computer Science (FOCS)}, pages 150--159, 2011.
\newblock \href {https://doi.org/10.1109/FOCS.2011.23}
  {\path{doi:10.1109/FOCS.2011.23}}.

\bibitem{Diestel12}
Reinhard Diestel.
\newblock {\em Graph Theory, 4th Edition}, volume 173 of {\em Graduate texts in
  mathematics}.
\newblock Springer, 2012.
\newblock URL: \url{https://dblp.org/rec/books/daglib/0030488.bib}.

\bibitem{DornPBF10}
Frederic Dorn, Eelko Penninkx, Hans~L. Bodlaender, and Fedor~V. Fomin.
\newblock {Efficient Exact Algorithms on Planar Graphs: Exploiting Sphere Cut
  Decompositions}.
\newblock {\em Algorithmica}, 58(3):790--810, 2010.
\newblock \href {https://doi.org/10.1007/s00453-009-9296-1}
  {\path{doi:10.1007/s00453-009-9296-1}}.

\bibitem{DF13}
Rodney~G. Downey and Michael~R. Fellows.
\newblock {\em Fundamentals of Parameterized Complexity}.
\newblock Texts in Computer Science. Springer, 2013.
\newblock \href {https://doi.org/10.1007/978-1-4471-5559-1}
  {\path{doi:10.1007/978-1-4471-5559-1}}.

\bibitem{FominLPS16}
Fedor~V. Fomin, Daniel Lokshtanov, Fahad Panolan, and Saket Saurabh.
\newblock Efficient computation of representative families with applications in
  parameterized and exact algorithms.
\newblock {\em Journal of the {ACM}}, 63(4):29:1--29:60, 2016.
\newblock \href {https://doi.org/10.1145/2886094} {\path{doi:10.1145/2886094}}.

\bibitem{FominSM15grap}
Fedor~V. Fomin, Saket Saurabh, and Neeldhara Misra.
\newblock Graph modification problems: {A} modern perspective.
\newblock In {\em Proc. of the 9th International Frontiers in Algorithmics
  Workshop (FAW)}, volume 9130 of {\em LNCS}, pages 3--6, 2015.
\newblock \href {https://doi.org/10.1007/978-3-319-19647-3\_1}
  {\path{doi:10.1007/978-3-319-19647-3\_1}}.

\bibitem{GareyJ79}
Michael~R. Garey and David~S. Johnson.
\newblock {\em {Computers and Intractability: {A} Guide to the Theory of
  NP-Completeness}}.
\newblock W. H. Freeman, 1979.
\newblock URL: \url{https://dblp.org/rec/books/fm/GareyJ79.bib}.

\bibitem{ImpagliazzoP01-ETH}
Russell Impagliazzo and Ramamohan Paturi.
\newblock {On the Complexity of $k$-SAT}.
\newblock {\em Journal of Computer and System Sciences}, 62(2):367--375, 2001.
\newblock \href {https://doi.org/10.1006/jcss.2000.1727}
  {\path{doi:10.1006/jcss.2000.1727}}.

\bibitem{ImpagliazzoP01}
Russell Impagliazzo, Ramamohan Paturi, and Francis Zane.
\newblock {Which Problems Have Strongly Exponential Complexity?}
\newblock {\em Journal of Computer and System Sciences}, 63(4):512--530, 2001.
\newblock \href {https://doi.org/10.1006/jcss.2001.1774}
  {\path{doi:10.1006/jcss.2001.1774}}.

\bibitem{JansenLS14}
Bart M.~P. Jansen, Daniel Lokshtanov, and Saket Saurabh.
\newblock {A Near-Optimal Planarization Algorithm}.
\newblock In {\em Proc. of the 25th Annual {ACM-SIAM} Symposium on Discrete
  Algorithms (SODA)}, pages 1802--1811, 2014.
\newblock \href {https://doi.org/10.1137/1.9781611973402.130}
  {\path{doi:10.1137/1.9781611973402.130}}.

\bibitem{Klo94}
Ton Kloks.
\newblock {\em Treewidth. Computations and Approximations}.
\newblock Springer-Verlag LNCS, 1994.
\newblock \href {https://doi.org/10.1007/BFb0045375}
  {\path{doi:10.1007/BFb0045375}}.

\bibitem{LeYa80}
John~M. Lewis and Mihalis Yannakakis.
\newblock {The Node-Deletion Problem for Hereditary Properties is NP-Complete}.
\newblock {\em Journal of Computer and System Sciences}, 20(2):219--230, 1980.
\newblock \href {https://doi.org/10.1016/0022-0000(80)90060-4}
  {\path{doi:10.1016/0022-0000(80)90060-4}}.

\bibitem{LokshtanovMS11}
Daniel Lokshtanov, D{\'{a}}niel Marx, and Saket Saurabh.
\newblock {Lower bounds based on the Exponential Time Hypothesis}.
\newblock {\em Bulletin of the {EATCS}}, 105:41--72, 2011.

\bibitem{Pilipczuk17}
Marcin Pilipczuk.
\newblock {A tight lower bound for Vertex Planarization on graphs of bounded
  treewidth}.
\newblock {\em Discrete Applied Mathematics}, 231:211--216, 2017.
\newblock \href {https://doi.org/10.1016/j.dam.2016.05.019}
  {\path{doi:10.1016/j.dam.2016.05.019}}.

\bibitem{Pilipczuk11}
Michal Pilipczuk.
\newblock Problems parameterized by treewidth tractable in single exponential
  time: {A} logical approach.
\newblock In {\em Proc. of the 36th International Symposium on Mathematical
  Foundations of Computer Science (MFCS)}, volume 6907 of {\em LNCS}, pages
  520--531, 2011.
\newblock \href {https://doi.org/10.1007/978-3-642-22993-0\_47}
  {\path{doi:10.1007/978-3-642-22993-0\_47}}.

\bibitem{RueST14}
Juanjo Ru{\'{e}}, Ignasi Sau, and Dimitrios~M. Thilikos.
\newblock Dynamic programming for graphs on surfaces.
\newblock {\em {ACM} Transactions on Algorithms}, 10(2):8:1--8:26, 2014.
\newblock \href {https://doi.org/10.1145/2556952} {\path{doi:10.1145/2556952}}.

\end{thebibliography}

 \clearpage
 \appendix

\end{document}